\begin{document}
\newtheorem{theorem}{Theorem}[section]
\newtheorem{corollary}[theorem]{Corollary}
\newtheorem{proposition}[theorem]{Proposition}
\newtheorem{lemma}[theorem]{Lemma}
\newtheorem{example}[theorem]{Example}
\newtheorem{remark}[theorem]{Remark}
\newtheorem{problem}[theorem]{Problem}
\newtheorem{definition}[theorem]{Definition}

\newcommand{\mR}{{\mathbb R}}
\newcommand{\D}{{\mathbb D}}
\newcommand{\E}{{\mathbb E}}
\newcommand{\mcN}{{\mathcal N}}
\newcommand{\mcR}{{\mathcal R}}
\newcommand{\cG}{{\mathcal G}}
\newcommand{\diag}{\operatorname{diag}}
\newcommand{\tr}{\operatorname{trace}}
\newcommand{\ignore}[1]{}
\newcommand{\mike}{\color{magenta}}
\newcommand{\bike}{\color{blue}}
\newcommand{\rike}{\color{red}}
\newcommand{\Mike}{\color{blue}}
\newcommand{\mD}{{\mathbb D}}
\newcommand{\bbD}{{\mD}}
\newcommand{\support}{{\rm Supp}}
\newcommand{\trace}{\operatorname{\rm trace}}
\renewcommand{\diag}{\mathop{\mathrm{diag}}}

\def\spacingset#1{\def\baselinestretch{#1}\small\normalsize}
\setlength{\parskip}{10pt}
\setlength{\parindent}{20pt}
\spacingset{1}

\definecolor{grey}{rgb}{0.6,0.6,0.6}
\definecolor{lgrey}{rgb}{0.9,.7,0.7}

\newcommand{\fM}{{{\mathfrak M}}}
\newcommand{\bmu}{{\pmb{\mu}}}

\title{Robust transport over networks}

\author{Yongxin Chen, Tryphon Georgiou, Michele Pavon and Allen Tannenbaum
\thanks{Y.\ Chen is with the Department of Mechanical Engineering,
University of Minnesota, Minneapolis, Minnesota MN 55455, USA; chen2468@umn.edu}
\thanks{T.T.\ Georgiou is with the Department of Electrical and Computer Engineering,
University of Minnesota, Minneapolis, Minnesota MN 55455, USA; tryphon@umn.edu}
\thanks{M.\ Pavon is with the Dipartimento di Matematica ``Tullio Levi Civita",
Universit\`a di Padova, via Trieste 63, 35121 Padova, Italy; pavon@math.unipd.it}
\thanks{A. Tannenbaum is with the Department of Computer Science, Stony Brook University, Stony Brook, NY 11794:  allen.tannenbaum@stonybrook.edu}
\thanks{Supported in part by the
NSF under Grant ECCS-1509387,
the AFOSR under Grants FA9550-12-1-0319 and FA9550-15-1-0045, the NIH under Grants P41-RR-013218, P41-EB-015902 and 1U24CA18092401A1.
This work was also supported by the Vincentine Hermes-Luh Chair, and by the University of Padova Research Project CPDA 140897.}}

\maketitle
\begin{abstract}We consider transportation over a strongly connected, directed graph.
The scheduling amounts to selecting transition probabilities for a discrete-time Markov evolution which is designed to be consistent with initial and final marginal constraints on mass transport. We address the situation where initially the mass is concentrated on certain nodes and needs to be transported in a certain time period to another set of nodes, possibly disjoint from the first.
%
The random evolution is selected to be closest to a {\em prior} measure on paths in the relative entropy sense--such a construction is known as a Schr\"odinger bridge between the two given marginals. It may be viewed as an atypical stochastic control problem where the control consists in suitably modifying the prior transition mechanism.
%
The prior can be chosen to incorporate constraints and costs for traversing specific edges of the graph, but it can also be selected to allocate {\em equal probability to all paths of equal length connecting any two nodes} (i.e., a uniform distribution on paths). This latter choice for prior transitions relies on the so-called {\em Ruelle-Bowen random walker} and gives rise to scheduling that tends to utilize all paths as uniformly as the topology allows.
Thus, this Ruelle-Bowen law ($\fM_{\rm RB}$) taken as prior, leads to a transportation plan
that tends to lessen congestion and ensures a level of robustness. 
We also show that the distribution $\fM_{\rm RB}$ on paths,
which attains the maximum entropy rate for the random walker given by the topological entropy, can itself be obtained as the time-homogeneous solution of a maximum entropy problem for measures on paths
(also a Schr\"{o}dinger bridge problem, albeit with prior that is not a probability measure). Finally we show that the paradigm of Schr\"odinger bridges as a mechanism for scheduling transport on networks can be adapted to 
graphs that are not strongly connected, as well as to weighted graphs. In the latter case, our approach may be used to design a transportation plan which effectively compromises between robustness and other criteria such as cost. Indeed, we explicitly provide a robust transportation plan which assigns {\em maximum probability} to {\em minimum cost paths} and therefore compares favourably with Optimal Mass Transportation strategies.
\end{abstract}

\section{Introduction}
Transport over networks has been the focus of a rapidly expanding literature due to its intrinsic relevance in a wide range of applications that include power transmission, traffic, financial transactions, biological systems and so on \cite{callaway2000network,watts1998collective,scott2006network,cabanes2014ants}. Furthermore, the topic relates to a host of other questions pertaining to the connectivity of graphs and the relative significance of their nodes
as in the Google PageRank problem \cite{brin2012reprint} and the study of interaction between genes in biological networks \cite{sandhu2015graph}.

Our starting point is an important insight on the relation between the topological structure of a network and the entropy rate of a random walker on the graph \cite{delvenne2011centrality,kitano2007towards}. As it turns out, there is a unique way to specify transition probabilities at each node in such a way so that all paths of equal length joining any two particular nodes have equal probability. Thereby, a measure is placed on the family
of paths between graph nodes that maximizes the entropy rate of a random walker, and this is a characteristic of the network. So far, the use of this concept has been to assign significance to each node in relation to the corresponding occupancy stationary distribution ({\em centrality measures}).

The focus in our paper is on how to schedule transportation plans across a network. The novel framework that we propose is that of the so-called Schr\"odinger bridge problem, where a flow is specified in agreement with an initial and a final marginal distribution on the nodes while, at the same time, the probability law on the paths is the closest possible to a prior in the relative entropy sense. The Ruelle-Bowen random walk provides a natural notion of ``uniform'' prior which gives equal importance to all paths. As a result, the transportation flow that is selected to agree with specified initial and final marginals tends to spread across all available paths as much as possible given the topological structure of the network. Thereby, such a flow leads to relatively low probability of conflict and congestion, and ensures a certain degree of inherent robustness of the transport plan. It is well appreciated that, typically, robustness, efficiency and cost are conflicting criteria when designing networks. 

By extending our approach to weighted graphs, we show that the choice of a prior distribution may be used to ensure that the resulting transportation attains a satisfactory compromise between robustness and other criteria such as cost. Indeed, we exhibit a {\em robust} transportation plan which assigns {\em maximum probability} to all {\em minimum cost paths}. It appears attractive when compared to Optimal Mass Transportation strategies which are not necessarily robust and where the minimum cost of transportation between any two nodes is supposed to be given.
Thus, the approach to scheduling transport based on Schr\"odinger bridges affords great flexibility. Moreover, it appears computationally attractive in view of the iterative algorithm proposed in \cite{georgiou2015positive}.

The paper is outlined as follows. In Section \ref{SBP}, we present the solution to a general Schr\"{o}dinger bridge problem (SBP), where the prior measure is not necessarily a probability measure, as a straightforward extension of the results in \cite{pavon2010discrete,georgiou2015positive}. Section \ref{THB} is devoted to solutions of the SBP with equal initial and final marginals which have a {\em time-invariant} transition mechanism so that they admit invariant measures. We establish the surprising result (Theorem \ref{uniquebridge}) that there is only one such bridge. This measure on paths can be constructed generalizing a classical result by Parry \cite{parry1964intrinsic}. In Section \ref{RB}, considering the special case of a prior transition given by the adjacency matrix, we describe the most important features of the Ruelle-Bowen random walker along the lines of \cite{delvenne2011centrality}. We observe that this measure $\fM_{\rm RB}$ on trajectories can be viewed as a solution to a ``time-homogeneous" Schr\"{o}dinger bridge problem where the prior transition mechanism is given by the adjacency matrix. Section \ref{RTG} describes our procedure to produce a robust transportation plan over a given strongly connected network: We take the Ruelle-Bowen distribution $\fM_{\rm RB}$ as prior in a Schr\"{o}dinger bridge problem with prescribed initial and final marginals. We also prove that the optimal transportation can also be obtained in one step by taking the rescaled adjacency matrix as prior transition mechanism (Proposition \ref{prop3}). In Section \ref{weighted}, we outline the extension of our approach to the cases of weighted and not strongly connected graphs. Finally, in Section \ref{examples} we illustrate our approach on a simple unweighted and weighted graph.

\section{The Discrete Schr\"odinger bridge problem}\label{SBP}

We first describe the ``ingredients" of  the discrete Schr\"odinger Bridge problem (SBP) considered in \cite{pavon2010discrete,georgiou2015positive}. In fact, we will consider a slight generalization, where the ``prior'' is not necessarily a probability law. 
The goal is to determine a time-evolution of probability distributions $\nu_t(\cdot)$ having support on a discrete space
\[\mathcal X=\{1,\ldots,n\},\] 
e.g., the nodes of a network, over a time-indexing set 
\[{\mathcal T}=\{0,1,\ldots,N\}\]
 in a way such that it matches the specified marginal distributions $\nu_0(\cdot)$ and $\nu_N(\cdot)$ and the resulting random evolution
is closest to the ``prior'' in a suitable sense. Regarding notation, we use $\mu_t(\cdot)$, $\nu_t(\cdot)$ for distributions, where typically, $\mu$ relates to a ``prior'' law while $\nu$ represents a ``new'' distribution with end-points specified and obtained by solving the SBP.

The prior law is induced by the Markovian evolution
 \begin{equation}\label{FP}
\mu_{t+1}(x_{t+1})=\sum_{x_t\in\mathcal X} \mu_t(x_t) m_{x_{t}x_{t+1}}
\end{equation}
for nonnegative distributions $\mu_t(\cdot)$ over $\mathcal X$ with $t\in{\mathcal T}$. Throughout, we assume that $m_{ij}\geq 0$ for all indices $i,j\in{\mathcal X}$ and for simplicity, for the most part, that the matrix
\[
M=\left[ m_{ij}\right]_{i,j=1}^n
\]
does not depend on $t$.
In this case, we will often assume that all entries of $M^N$ are positive.
The rows of the transition matrix $M$ do not necessarily sum up to one, in which case the ``total transported mass'' is not necessarily preserved. This is the case, in particular, of a Markov chain with  ``creation" and ``killing". 
In fact, $M$ may simply encode the topological structure of a directed network with $m_{ij}$ being zero or one, depending whether a certain transition is allowed.

The evolution \eqref{FP}, together with measure $\mu_0(\cdot)$, which we assume positive on $\mathcal X$, i.e.,
\begin{equation}\label{eq:mupositive}
\mu_0(x)>0\mbox{ for all }x\in\mathcal X,
\end{equation}
 induces
a measure $\fM$ on $\mathcal X^{N+1}$ as follows. It assigns to a path  $x=(x_0,x_1,\ldots,x_N)\in\mathcal X^{N+1}$ the value
\begin{equation}\label{prior}\fM(x_0,x_1,\ldots,x_N)=\mu_0(x_0)m_{x_0x_1}\cdots m_{x_{N-1}x_N},
\end{equation}
and gives rise to a flow
of {\em one-time marginals}
\[\mu_t(x_t) = \sum_{x_{\ell\neq t}}\fM(x_0,x_1,\ldots,x_N), \quad t\in\mathcal T.\]
 The ``prior" distribution $\fM$ on the space of paths may be at odds with a pair of specified marginals $\nu_0$ and $\nu_N$ in that one or possibly both,
\[
\mu_0(x_0)\neq \nu_0(x_0),
~~~\mu_N (x_N)\neq \nu_N (x_N).
\]

We denote by ${\mathcal P}(\nu_0,\nu_N)$ the family of probability distributions on $\mathcal X^{N+1}$ having the prescribed marginals. We seek a distribution in this set which is closest to the prior $\fM$ in a suitable entropic sense. To this end, let us first recall the definition of relative entropy for probability distributions.

\begin{definition}The {\em Relative Entropy} between the probability distributions $P$ and $Q$ is
\begin{equation*}
\bbD(P\|Q):=\left\{\begin{array}{ll} \sum_{x}P(x)\log\frac{P(x)}{Q(x)}, & \support (P)\subseteq \support (Q),\\
+\infty , & \support (P)\not\subseteq \support (Q),\end{array}\right.
\end{equation*} 
where, by definition,  $0\cdot\log 0=0$ and the summation is over the common set where they are supported.
\end{definition}

As is well known, $\bbD(P\|Q)$ is not symmetric and does not satisfy the triangle inequality. It does, however, satisfy $\bbD(P\|Q)\ge 0$ and $\bbD(P\|Q)=0$ if and only if $Q=P$, see, e.g., \cite{cover2012elements}. It can also be extended to positive measures that are not  probability distributions. In fact, it is quite common to consider $Q$ to be a uniform measure that may not be a probability measure, such as the Lebesgue measure or the stationary Wiener measure \cite{Fol88}. Naturally, while the value of $\bbD(P\|Q)$ may turn out negative due to miss-match of scaling, the relative entropy is always jointly convex.
We view the prior $\fM$ (specified by $M$ and $\mu_0$) in a similar manner, and consider the Schr\"odinger Bridge  problem:

\begin{problem}\label{prob:optimization}
Determine
 \begin{eqnarray}\label{eq:optimization}
\fM^*[\nu_0,\nu_N]={\rm argmin}\{ \bbD(P\|\fM) \mid  P\in {\mathcal P}(\nu_0,\nu_N)
\}.
\end{eqnarray}
\end{problem}

Provided all entries of $M^N$ are positive,
the problem has a solution, which is unique due to strict convexity.
This is stated next.

\begin{theorem}\label{solbridge} Assume that $M^N$ has all positive elements.
There exist nonnegative functions  $\varphi(\cdot)$ and $\hat{\varphi}(\cdot)$ on $[0,N]\times\mathcal X$ satisfying for $t\in[0,N-1]$ the system
\begin{subequations}\label{eq:Schroedingersystem}
\begin{eqnarray}\label{Schroedingersystem1}
\varphi(t,i)&=&\sum_{j}m_{ij}\varphi(t+1,j),\\\hat{\varphi}(t+1,j)&=&\sum_{i}m_{ij}\hat{\varphi}(t,i),\label{Schroedingersystem2}
\end{eqnarray}
with the boundary conditions 
\begin{eqnarray}\label{bndconditions1}
\varphi(0,x_0)\cdot\hat{\varphi}(0,x_0)&=&\nu_0(x_0)\\\label{bndconditions2}
\varphi(N,x_N)\cdot\hat{\varphi}(N,x_N)&=&\nu_N(x_N),
\end{eqnarray}
\end{subequations}
for all $x_0, x_N\in\mathcal X$.
Moreover, the solution $\fM^*[\nu_0,\nu_N]$ to Problem \ref{prob:optimization} is unique and obtained by
\[
\fM^*[\nu_0,\nu_N](x_0,\ldots,x_N)=\nu_0(x_0)\pi_{x_0x_{1}}(0)\cdots \pi_{x_{N-1}x_{N}}(N-1),
\]
where\footnote{Here we use the convention that $0/0=0$.}
\begin{equation}\label{opttransition1}\pi_{ij}(t):=m_{ij}\frac{\varphi(t+1,j)}{\varphi(t,i)}.
\end{equation}
Equation \eqref{opttransition1} specifies {\em one-step transition probabilities} that are well defined. 
\end{theorem}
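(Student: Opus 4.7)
The plan is to combine a strictly convex optimization argument with Lagrange multipliers associated to the two marginal constraints, and then invoke a Fortet/Sinkhorn-type fixed-point theorem to produce the pair $(\varphi,\hat\varphi)$. Uniqueness of the minimizer is essentially automatic: $\mathcal P(\nu_0,\nu_N)$ is a compact convex polytope, $P\mapsto\bbD(P\|\fM)$ is strictly convex and lower semi-continuous, and positivity of all entries of $M^N$ guarantees at least one admissible $P$ with finite relative entropy with respect to $\fM$ (take, e.g., a suitable product measure supported on paths joining $\support(\nu_0)$ to $\support(\nu_N)$). Hence a unique minimizer $\fM^*[\nu_0,\nu_N]$ exists.

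I would identify its form from the first-order conditions. Introducing Lagrange multipliers $\lambda_0(\cdot),\lambda_N(\cdot)$ for the two marginal constraints and differentiating
\[
\bbD(P\|\fM)-\sum_{x_0}\lambda_0(x_0)\!\!\sum_{x_{1:N}}\!\!P(x_{0:N})-\sum_{x_N}\lambda_N(x_N)\!\!\sum_{x_{0:N-1}}\!\!P(x_{0:N})
\]
with respect to $P(x_{0:N})$ yields $\log(P^*/\fM)=\lambda_0(x_0)+\lambda_N(x_N)-1$, so using \eqref{prior},
\[
\fM^*(x_{0:N})=\hat\varphi(0,x_0)\,m_{x_0x_1}\cdots m_{x_{N-1}x_N}\,\varphi(N,x_N),
\]
where $\hat\varphi(0,x_0):=\mu_0(x_0)e^{\lambda_0(x_0)-1}$ and $\varphi(N,x_N):=e^{\lambda_N(x_N)}$ are nonnegative. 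I would then propagate these in time by defining $\varphi(t,\cdot)$ via the backward recursion \eqref{Schroedingersystem1} starting from $\varphi(N,\cdot)$, and $\hat\varphi(t,\cdot)$ via the forward recursion \eqref{Schroedingersystem2} starting from $\hat\varphi(0,\cdot)$. Summing $\fM^*$ over the complementary coordinates shows that the one-time marginal at $t$ equals $\varphi(t,x_t)\hat\varphi(t,x_t)$, so the prescribed marginal constraints at $t=0,N$ become precisely \eqref{bndconditions1}--\eqref{bndconditions2}. A short telescoping computation,
\[
\prod_{t=0}^{N-1}m_{x_tx_{t+1}}\frac{\varphi(t+1,x_{t+1})}{\varphi(t,x_t)}=\frac{\varphi(N,x_N)}{\varphi(0,x_0)}\prod_{t=0}^{N-1}m_{x_tx_{t+1}},
\]
together with $\nu_0(x_0)/\varphi(0,x_0)=\hat\varphi(0,x_0)$, recovers the product expression $\nu_0(x_0)\pi_{x_0x_1}(0)\cdots\pi_{x_{N-1}x_N}(N-1)$ stated in the theorem, and \eqref{Schroedingersystem1} gives $\sum_j\pi_{ij}(t)=1$ wherever $\varphi(t,i)>0$, certifying that \eqref{opttransition1} is a bona fide stochastic transition.

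The genuine obstacle is the existence of nonnegative $\varphi,\hat\varphi$ that simultaneously satisfy the \emph{nonlinear coupled} boundary-value problem \eqref{eq:Schroedingersystem}, i.e.\ that the multipliers above actually exist. This is the classical discrete Schr\"odinger system, which I would treat by an iterative proportional fitting / Sinkhorn-type scheme: starting from any positive $\varphi(N,\cdot)$, alternately (i) sweep backward through $M$ to obtain $\varphi(0,\cdot)$ and then reset $\hat\varphi(0,\cdot):=\nu_0/\varphi(0,\cdot)$, and (ii) sweep forward through $M^\top$ to obtain $\hat\varphi(N,\cdot)$ and then reset $\varphi(N,\cdot):=\nu_N/\hat\varphi(N,\cdot)$. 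Under the standing hypothesis $M^N>0$, the composition of these two steps is a strict contraction in Hilbert's projective metric and therefore has a unique fixed point, up to the scaling $\varphi\mapsto c\varphi$, $\hat\varphi\mapsto c^{-1}\hat\varphi$. Rather than reprove this, I would invoke the existence and convergence results from \cite{pavon2010discrete,georgiou2015positive}, which apply in the present slightly more general setting in which $\fM$ need not be a probability measure, since their arguments depend only on the positivity of $M^N$ and the strictly positive boundary data $\nu_0,\nu_N$.
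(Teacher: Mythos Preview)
Your proposal is correct and follows essentially the same route as the paper: both identify the optimizer via the Schr\"odinger system and establish existence of $(\varphi,\hat\varphi)$ by showing that the Sinkhorn-type iteration \eqref{eq:composition} is a contraction in the Hilbert projective metric, deferring to \cite{pavon2010discrete,georgiou2015positive}. You simply spell out the Lagrange-multiplier derivation and the telescoping verification that the paper leaves implicit in its citation of those references.
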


\begin{proof}The argument in \cite[Theorem 4.1]{pavon2010discrete} and \cite[Section III]{georgiou2015positive} applies verbatim to this setting which is slightly more general in that $M$ does not prescribe a probability kernel. The system (\ref{Schroedingersystem1}-\ref{bndconditions2}) is known as a Schr\"odinger system. The existence of solution is shown in \cite[Section III]{georgiou2015positive} by establishing that
the composition
\begin{align}\label{eq:composition}\nonumber
\hat\varphi(0,x_0)  &\overset{ (M^T)^N}{\longrightarrow} \hat\varphi(N,x_N)\overset{\eqref{bndconditions2}}{\longrightarrow}\varphi(N,x_N)
\longrightarrow\dots\\[.1in] &\ldots\overset{M^N}{\longrightarrow} \varphi(0,x_0)
\overset{\eqref{bndconditions1}}{\longrightarrow}\left(\hat\varphi(0,x_0)\right)_{\rm next}
\end{align}
is contractive in the Hilbert metric \cite{lemmens2012nonlinear,birkhoff1957extensions,bushell1973projective,bushell1973hilbert}.
The fact that $\pi_{ij}(t)$ in \eqref{opttransition1} satisfy $\sum_{j}\pi_{ij}(t)=1$ follows from (\ref{Schroedingersystem1}).
\end{proof}

Notice that $\varphi$ and $\hat{\varphi}$ are unique up to multiplication of $\varphi$ by a positive constant and division of $\hat{\varphi}$ by the same constant. This is a direct consequence of the proof above as the Hilbert metric is a metric on the projective space. The statement of the theorem is analogous to results for the classical Schr\"{o}dinger system \eqref{eq:Schroedingersystem} of diffusions that have been established by Fortet, Beurling, Jamison and F\"{o}llmer \cite{For40,Beu60,Jam74,Fol88}. The requirement that $M^N$ has only positive entries can be slightly relaxed and replaced by a suitable condition that guarantees existence of solution for the particular $\nu_0$ and $\nu_N$. The case when $M$ is time-varying can also be readily established along the lines of \cite[Theorem 4.1]{pavon2010discrete} and \cite[Theorem 2]{georgiou2015positive}.

Finally, to simplify the notation, let $\varphi(t)$ and $\hat{\varphi}(t)$ denote the column vectors with components $\varphi(t,i)$ and $\hat{\varphi}(t,i)$, respectively, with $i\in\mathcal X$. In matricial form, (\ref{Schroedingersystem1}), \eqref{Schroedingersystem2} and (\ref{opttransition1}) read
\begin{subequations}\label{eq:notations}
\begin{equation}
\varphi(t)=M\varphi(t+1),\; ~~\hat{\varphi}(t+1)=M^T\hat{\varphi}(t),
\end{equation}
and
\begin{equation}\label{matrixtransition}
	\Pi(t)=[\pi_{ij}(t)]=\diag(\varphi(t))^{-1}M\diag(\varphi(t+1)).
\end{equation}
\end{subequations}

\section{Time-homogeneous bridges}\label{THB}
In this section, we consider the case of Schr\"odinger bridge problems when the marginals are identical, namely, $\nu_0=\nu_N=\nu$. In particular, we are interested in the case when the solution of the SBP corresponds to a time-homogeneous Markov evolution. Note that, from Theorem \ref{solbridge}, $\fM^*[\nu,\nu]$ is in general time inhomogeneous.
 We first recall the following celebrated result on the spectral properties of nonnegative matrices \cite{horn2012matrix}.

\begin{theorem}[Perron-Frobenius]\label{perrontheorem}
Let $A=\left(a_{ij}\right)$ be an $n\times n$ matrix with nonnegative elements. Suppose there exists $N$ such that $A^N$ has only positive elements, 
and let $\lambda_A$ be its spectral radius. Then\\[-.25in]
\begin{enumerate}
\item[i)] $\lambda_A>0$ is an eigenvalue of $A$;
\item[ii)] $\lambda_A$ is a simple eigenvalue;
\item[iii)] there exists an eigenvector $v$ corresponding to $\lambda_A$ with strictly positive entries;
\item[iv)] $v$ is the only non-negative eigenvector of A;
\item[v)] \label{item_v} let $B=[b_{ij}]$ be an $n\times n$ matrix with nonnegative elements. If $a_{ij}\le b_{ij}$, $\forall i,j\leq n$ and $A\neq B$, then $\lambda_A<\lambda_B$. 
\end{enumerate}
\end{theorem}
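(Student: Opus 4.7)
The plan is to derive Perron--Frobenius from the Hilbert-metric contractivity already invoked in the proof of Theorem \ref{solbridge}. The hypothesis that $A^N$ has all positive entries means precisely that $A^N$ sends the nonzero nonnegative orthant $K\setminus\{0\}$ into the open positive orthant $K^\circ$, so by the Birkhoff--Hopf theorem \cite{birkhoff1957extensions,bushell1973hilbert} $A^N$ is a strict contraction on $K^\circ$ in the projective Hilbert metric. The Banach fixed-point theorem then delivers a unique positive fixed ray for $A^N$: some $v\in K^\circ$ satisfying $A^N v=\lambda_0 v$ with $\lambda_0>0$.

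From this single contraction statement, claims (i), (iii) and (iv) fall out. Since $A$ commutes with $A^N$, the vector $Av\in K$ satisfies $A^N(Av)=\lambda_0(Av)$, so by uniqueness of the positive fixed ray, $Av=\lambda_A v$ for a scalar with $\lambda_A^N=\lambda_0>0$; this yields (i) and (iii). Applying the same construction to $A^T$ produces a strictly positive left Perron eigenvector $u>0$ with $u^T A=\lambda_A u^T$ (same eigenvalue because $A$ and $A^T$ share their spectrum). For any eigenpair $(\mu,w)$ of $A$, the entrywise inequality $|\mu|\,|w|\le A|w|$ combined with positivity of $u$ yields $|\mu|\,u^T|w|\le\lambda_A\,u^T|w|$, hence $|\mu|\le\lambda_A$: so $\lambda_A$ is indeed the spectral radius. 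For (iv), any nonnegative eigenvector $w$ of $A$ satisfies $A^N w\in K^\circ$, which forces $w\in K^\circ$, and uniqueness of the fixed ray makes $w$ a positive multiple of $v$.

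Claim (ii), algebraic simplicity of $\lambda_A$, is the delicate step and the main obstacle. Geometric simplicity is already contained in (iv); to upgrade to algebraic simplicity, I would argue that the iterates $(A/\lambda_A)^{kN}$ converge in operator norm to the rank-one projector $vu^T/(u^Tv)$. This convergence follows from the Hilbert-metric contraction combined with the normalization supplied by the left eigenvector $u$; its boundedness rules out a nontrivial Jordan block at the eigenvalue $\lambda_A^N$ of $A^N$, because such a block would force polynomial growth of the iterates. Algebraic simplicity of $\lambda_A^N$ for $A^N$ then descends to algebraic simplicity of $\lambda_A$ for $A$, since every eigenvalue $\mu$ of $A$ with $\mu^N=\lambda_A^N$ contributes its full algebraic multiplicity to $\lambda_A^N$ as an eigenvalue of $A^N$.

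Finally, the monotonicity claim (v) is a short consequence of the left/right Perron machinery. Since $0\le A\le B$ entrywise with $A\ne B$, we have $B^N\ge A^N>0$, so $B$ is also primitive and admits its own right Perron eigenvector $v_B\in K^\circ$. Writing $u>0$ for the left Perron eigenvector of $A$, one computes
\begin{equation*}
\lambda_B\,u^Tv_B=u^TBv_B=u^TAv_B+u^T(B-A)v_B=\lambda_A\,u^Tv_B+u^T(B-A)v_B.
\end{equation*}
The last term is strictly positive because $u,v_B\in K^\circ$ and $B-A$ is nonnegative and nonzero, while $u^Tv_B>0$; dividing through yields $\lambda_B>\lambda_A$.
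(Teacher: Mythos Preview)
The paper does not prove Theorem \ref{perrontheorem}. It is quoted as a classical result with a reference to \cite{horn2012matrix} and then used as a black box; there is no proof in the paper to compare yours against.

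For what it is worth, your Hilbert-metric derivation is a standard and correct route to Perron--Frobenius, and it meshes nicely with the contraction machinery the paper already invokes in Theorem~\ref{solbridge}. One small wrinkle in the write-up: when you pass to $A^T$ and obtain $u>0$ with $A^Tu=\mu u$, the parenthetical ``same eigenvalue because $A$ and $A^T$ share their spectrum'' is not by itself enough---sharing a spectrum does not tell you \emph{which} eigenvalue carries the positive eigenvector. The one-line fix is to pair $u$ with the right eigenvector $v$: since $u,v\in K^\circ$ we have $u^Tv>0$, and $u^TAv$ equals both $\mu\,u^Tv$ and $\lambda_A\,u^Tv$, whence $\mu=\lambda_A$. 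With that adjustment the argument goes through.
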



Since the nonnegative matrix $M$ is such that $M^N$ has only positive elements, by the above Perron-Frobenius Theorem, $M$ has a unique positive eigenvalue $\lambda_M$ which is equal to the spectral radius. Let $\phi$ and $\hat{\phi}$ be the corresponding right and left eigenvectors and denote their entries by $\phi(x)$ and $\hat\phi(x)$ with $x\in\mathcal X$, respectively. Then both 
have only positive entries. We normalize $\phi$ and $\hat\phi$ so that
\[
\sum_{x\in\mathcal X}\phi(x)\hat{\phi}(x)=1.
\]
This leads to a special probability distribution
\begin{equation}\label{barnu}
\bar{\nu}(x)=\phi(x)\hat{\phi}(x).
\end{equation}
It turns out that $\bar{\nu}$ is the only probability measure such that the associated SBP has a time-homogeneous solution; we shall name it the {\em time-homogeneous bridge} associated with $M$. It admits the following variational characterization.

\begin{proposition} \label{prop} 
Let $M$ be a nonnegative matrix such that $M^N$ has only positive elements, and $\fM$ the measure on $\mathcal X^{N+1}$ given by \eqref{prior} with $\mu_0$ satisfying \eqref{eq:mupositive}. Then the solution to the Schr\"odinger bridge problem
 \begin{equation}\label{eq:optimization_stat}
\fM^*[\bar{\nu},\bar{\nu}]={\rm argmin}\{ \bbD(P\|\fM) | P\in {\mathcal P}(\bar{\nu},\bar{\nu})
\},
\end{equation}
where $\bar{\nu}$ is as in (\ref{barnu}), has the time-invariant transition matrix
\begin{equation}\label{stationarymatrixopttransition}
\bar{\Pi}=\lambda_M^{-1}\diag(\phi)^{-1}M\diag(\phi)
\end{equation}
and invariant measure $\bar{\nu}$. 
\end{proposition}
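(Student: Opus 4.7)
The plan is to invoke Theorem \ref{solbridge} and then exhibit an explicit pair $(\varphi,\hat\varphi)$ solving the Schr\"odinger system (\ref{Schroedingersystem1})--(\ref{bndconditions2}) with $\nu_0=\nu_N=\bar\nu$, chosen so that the resulting transition matrix \eqref{opttransition1} coincides with \eqref{stationarymatrixopttransition}. By the uniqueness (up to the scalar ambiguity remarked after Theorem \ref{solbridge}) of this pair, any such construction automatically gives the optimal transition. The natural ansatz is to take both $\varphi(t,\cdot)$ and $\hat\varphi(t,\cdot)$ proportional to the Perron--Frobenius eigenvectors $\phi$ and $\hat\phi$ of $M$, with time-dependent scalar prefactors that absorb the eigenvalue $\lambda_M$.

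Specifically, I would set
\[
\varphi(t,x)=\lambda_M^{\,N-t}\phi(x),\qquad \hat\varphi(t,x)=\lambda_M^{\,t-N}\hat\phi(x).
\]
The backward recursion $\varphi(t)=M\varphi(t+1)$ and the forward recursion $\hat\varphi(t+1)=M^T\hat\varphi(t)$ are then immediate from $M\phi=\lambda_M\phi$ and $M^T\hat\phi=\lambda_M\hat\phi$. For the boundary conditions, the normalization $\sum_x\phi(x)\hat\phi(x)=1$ which defined $\bar\nu$ via (\ref{barnu}) is not used here; instead, for every $x\in\mathcal X$,
\[
\varphi(0,x)\hat\varphi(0,x)=\lambda_M^N\phi(x)\cdot\lambda_M^{-N}\hat\phi(x)=\phi(x)\hat\phi(x)=\bar\nu(x),
\]
and the same identity holds at $t=N$, so (\ref{bndconditions1})--(\ref{bndconditions2}) are satisfied.

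Plugging into \eqref{opttransition1} the $\lambda_M$ factors cancel in a $t$-independent manner,
\[
\pi_{ij}(t)=m_{ij}\frac{\lambda_M^{\,N-t-1}\phi(j)}{\lambda_M^{\,N-t}\phi(i)}=\lambda_M^{-1}\frac{m_{ij}\phi(j)}{\phi(i)},
\]
which in matrix form is exactly $\bar\Pi=\lambda_M^{-1}\diag(\phi)^{-1}M\diag(\phi)$, and it is independent of $t$ as required. Finally I would verify that $\bar\nu$ is invariant under $\bar\Pi$ by a direct computation: using $\sum_i\hat\phi(i)m_{ij}=\lambda_M\hat\phi(j)$,
\[
\sum_i\bar\nu(i)\bar\pi_{ij}=\lambda_M^{-1}\phi(j)\sum_i\hat\phi(i)m_{ij}=\phi(j)\hat\phi(j)=\bar\nu(j).
\]

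I do not foresee any real obstacle: the only non-mechanical step is guessing the ansatz, and Perron--Frobenius (Theorem \ref{perrontheorem}) dictates essentially the only choice that makes the Schr\"odinger recursions time-separable. The uniqueness claim built into the proposition (that $\bar\nu$ is the \emph{only} probability measure yielding a time-homogeneous bridge) is not explicitly asserted in the statement of Proposition \ref{prop}, but were one to prove it, the argument would run: any time-homogeneous solution requires $\varphi(t+1)/\varphi(t)$ and $\hat\varphi(t+1)/\hat\varphi(t)$ to be constant in $t$, forcing $\varphi$ and $\hat\varphi$ to be eigenvectors of $M$ and $M^T$ respectively, and positivity together with Perron--Frobenius pins them down (up to the scalar ambiguity) to $\phi$ and $\hat\phi$, whence the common marginal must be $\bar\nu$.
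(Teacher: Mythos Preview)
Your proof is correct and follows essentially the same route as the paper: exhibit an explicit pair $(\varphi,\hat\varphi)$ built from the Perron--Frobenius eigenvectors $\phi,\hat\phi$ with geometric time-prefactors, verify the Schr\"odinger system \eqref{eq:Schroedingersystem}, read off the transition matrix via \eqref{matrixtransition}, and check invariance of $\bar\nu$ directly. The only cosmetic difference is your normalization $\varphi(t)=\lambda_M^{N-t}\phi$, $\hat\varphi(t)=\lambda_M^{t-N}\hat\phi$ versus the paper's $\varphi(t)=\lambda_M^{t}\phi$, $\hat\varphi(t)=\lambda_M^{-t}\hat\phi$; in fact your exponents are the ones that actually satisfy $\varphi(t)=M\varphi(t+1)$, so your version silently corrects a sign slip in the paper's displayed ansatz (the paper's subsequent formula for $\bar\Pi$ is nonetheless the right one).
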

\begin{proof}
Since $\phi$ and $\hat\phi$ are the right and left eigenvectors of $M$ associated with eigenvalue $\lambda_M$, the nonnegative functions $\varphi$ and $\hat\varphi$ defined by
    \[
         \varphi(t,x)=\lambda_M^t \phi(x), ~~
        \hat\varphi(t,x)=\lambda_M^{-t} \hat\phi(x)
    \]
satisfy the Schr\"odinger system \eqref{eq:Schroedingersystem}.
By Theorem \ref{solbridge}, the solution $\fM^*[\bar{\nu},\bar{\nu}]$ of the Schr\"odinger bridge problem \eqref{eq:optimization_stat} then has the transition matrix (see (\ref{matrixtransition}))
\begin{eqnarray*}
    \bar{\Pi}&=&\diag(\varphi(0))^{-1}M\diag(\varphi(1))
    \\&=&\lambda_M^{-1}\diag(\phi)^{-1}M\diag(\phi),
\end{eqnarray*}
which is exactly \eqref{stationarymatrixopttransition}. Moreover, since
\[
\bar{\Pi}^T\bar{\nu}=\lambda_M^{-1}\diag(\phi)M^T\hat{\phi}=\bar{\nu},
\]
it follows that $\bar{\nu}$ is the corresponding invariant measure.
\end{proof}
In particular, notice that $\fM^*[\bar{\nu},\bar{\nu}]$, and its extension to infinite paths $x=(x_0, x_1,x_2,\ldots)$ through (\ref{stationarymatrixopttransition}), is {\em stationary}.  Indeed, we have the following more general result which is of independent interest.
\begin{proposition} \label{stationarychain}Let $P\in {\mathcal P}(\nu,\nu)$ be a Markovian measure on $\mathcal X^{N+1}$ having time-invariant transition matrix $\Pi$. Then $\nu$ is invariant for $\Pi$, i.e. $\Pi^T\nu=\nu$.
\end{proposition}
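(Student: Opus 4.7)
My plan is to propagate the initial marginal through the transition kernel and then upgrade $N$-step invariance to one-step invariance. Since $P$ is Markovian with time-invariant transition $\Pi$ and initial marginal $\nu$, summing out all coordinates except $x_t$ in $P(x_0,\ldots,x_N)=\nu(x_0)\pi_{x_0 x_1}\cdots\pi_{x_{N-1}x_N}$ yields $\nu_t=(\Pi^T)^t\nu$ for every $t\in\{0,\ldots,N\}$. The hypothesis $P\in\mathcal P(\nu,\nu)$ therefore reads $(\Pi^T)^N\nu=\nu$, so $\nu$ is a fixed point of $(\Pi^T)^N$.

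Next I would exhibit a genuinely one-step $\Pi^T$-invariant probability via a Cesaro average. Setting $\bar\nu:=\frac{1}{N}\sum_{t=0}^{N-1}\nu_t$, a telescoping check gives $\Pi^T\bar\nu=\frac{1}{N}\sum_{t=0}^{N-1}\nu_{t+1}=\bar\nu+\frac{1}{N}(\nu_N-\nu_0)$, and the boundary correction vanishes because $\nu_N=\nu_0=\nu$. Thus $\bar\nu$ is always one-step invariant, irrespective of any further structural assumption on $\Pi$.

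The main obstacle is concluding $\nu=\bar\nu$: the identity $(\Pi^T)^N\nu=\nu$ alone does not force this, as is already seen by the two-state permutation $\Pi=\left(\begin{smallmatrix}0&1\\1&0\end{smallmatrix}\right)$ started at a vertex with $N=2$. To remove this obstruction I would appeal to the paper's standing hypothesis that $M^N$ has strictly positive entries; since the relevant $\Pi$ arises from $M$ through the rescaling \eqref{matrixtransition}, the matrix $\Pi^N$ is strictly positive as well, so $\Pi$ is primitive. By the Perron-Frobenius theorem recorded as Theorem \ref{perrontheorem}, the eigenvalue $1$ of $\Pi^T$ is then simple and the unique eigenvalue on the unit circle. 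Consequently $1$ is also a simple eigenvalue of $(\Pi^T)^N$, the set of probability vectors fixed by $(\Pi^T)^N$ is the singleton $\{\bar\nu\}$, and we obtain $\nu=\bar\nu$, whence $\Pi^T\nu=\nu$.
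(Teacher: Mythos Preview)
Your argument is correct and lands on the same key fact as the paper---uniqueness of the probability vector fixed by $(\Pi^T)^N$---but reaches it by a different tool. The paper sets $m:=\Pi^T\nu$, notes that both $\nu$ and $m$ are fixed by $(\Pi^T)^N$, and uses that $(\Pi^T)^N$ is a strict contraction in the Hilbert metric to conclude $d_H(\nu,m)=0$, hence $\nu=m$. You instead produce the Ces\`aro average $\bar\nu$ as a one-step invariant vector and then appeal to Perron--Frobenius simplicity of the eigenvalue $1$ for $(\Pi^T)^N$ to identify $\nu$ with $\bar\nu$. Both routes rest on the same unstated hypothesis, namely that $\Pi^N$ has strictly positive entries: this is exactly what makes the Hilbert contraction ratio $<1$ in the paper's proof, and exactly what your spectral argument needs. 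Your permutation counterexample and explicit flagging of this gap are more careful than the paper, which simply asserts $\lambda<1$.

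Two minor remarks. First, the Ces\`aro construction, while elegant, is not needed: once uniqueness of the $(\Pi^T)^N$-fixed probability vector is in hand, comparing $\nu$ directly with $\Pi^T\nu$ (both fixed by $(\Pi^T)^N$, the latter because $\Pi^T$ commutes with $(\Pi^T)^N$) already finishes, which is precisely the paper's move. Second, your justification that $\Pi^N$ is positive ``because $\Pi$ arises from $M$ via \eqref{matrixtransition}'' is specific to the Schr\"odinger-bridge transitions; since the proposition is phrased for an \emph{arbitrary} Markovian $P\in\mathcal P(\nu,\nu)$, it would be cleaner to add primitivity of $\Pi$ as an explicit hypothesis rather than import it from context.
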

\begin{proof}Let $\Pi^T\nu=m$. Then 
\[
d_H(\nu,m)=d_H((\Pi^T)^N\nu,(\Pi^T)^Nm)\le\lambda d_H(\nu,m)
\]
where $d_H$ is the Hilbert distance \cite{birkhoff1957extensions,georgiou2015positive} and $\lambda<1$ is the contraction ratio of the map $(\Pi_\nu^T)^N$. Since both $\nu$ and $m$ are probability distributions, it follows that $m=\nu$ and $\nu$ is invariant.
\end{proof}
We show next  that, under mild assumptions, there is only one time-homogeneous bridge between equal marginals. In the following result, we shall use the following notation. As before, let $\fM$ be given by \eqref{prior} with $\mu_0$ satisfying \eqref{eq:mupositive}. We denote by $\fM(2N)$ the unique extension of $\fM$  to all of the discrete interval $[0,2N]$ by its time-invariant transition mechanism. We also denote by $\fM^*[\nu,\nu](2N)$ the Schr\"odinger bridge with prior $\fM(2N)$ and equal marginals $\nu$ at times $t=0$ and $t=2N$.
\begin{theorem}\label{uniquebridge}
Let $M$ be a nonnegative matrix such that $M^N$ has only positive elements. Let $\nu$ be a probability measure. Suppose $N>1$ and that the transition matrix $\Pi_\nu$ of $\fM^*[\nu,\nu](2N)$ does not depend on time. Then $\nu=\bar{\nu}$ and $\Pi_\nu=\bar{\Pi}$.
\end{theorem}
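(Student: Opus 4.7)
My approach is to exploit the explicit formula \eqref{matrixtransition}, $\Pi_\nu(t) = \diag(\varphi(t))^{-1} M \diag(\varphi(t+1))$, and translate the time-invariance hypothesis into a rigid constraint on the Schr\"odinger potential $\varphi$ that forces $\varphi(t)$ to be proportional to the Perron eigenvector $\phi$ at every time. Since $M^N>0$, the existence argument behind Theorem \ref{solbridge} guarantees that $\varphi(t,i)>0$ for all $t,i$, so writing $D_t := \diag(\varphi(t))$, the equality $\Pi_\nu(t) = \Pi_\nu(t+1)$ may be rearranged as the commutation relation $E_t M = M E_{t+1}$ with $E_t := D_{t+1}D_t^{-1}$. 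Setting $\rho(t,i) := \varphi(t+1,i)/\varphi(t,i)$ and reading this entry by entry yields the local propagation law
\[
\rho(t,i) = \rho(t+1,j) \quad \text{whenever } m_{ij} > 0.
\]

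The main work is then to globalize this local identity. For any pair $(i_0,i_N)$ of nodes, positivity of $M^N$ supplies an admissible $N$-step path $i_0 \to i_1 \to \cdots \to i_N$, and iterating the propagation law along it produces $\rho(t,i_0) = \rho(t+N,i_N)$ for every $t$ with $t+N \le 2N-1$. Arbitrariness of $(i_0,i_N)$ then forces both $\rho(t,\cdot)$ and $\rho(t+N,\cdot)$ to be constants in space for all $t\in[0,N-1]$; as $t$ ranges over $[0,N-1]$ the shifted index $t+N$ ranges over $[N,2N-1]$, so (using $N>1$ to guarantee the intervals are well-posed) one covers the whole index set $[0,2N-1]$. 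Hence there exist positive scalars $\lambda_t$ such that $\varphi(t+1) = \lambda_t\varphi(t)$ throughout. Plugging into the backward recursion $\varphi(t) = M\varphi(t+1)$ gives $M\varphi(t) = \lambda_t^{-1}\varphi(t)$; since $\varphi(t)$ is strictly positive, parts (iii)--(iv) of the Perron-Frobenius Theorem \ref{perrontheorem} force $\varphi(t) = c_t\phi$ and $\lambda_t = \lambda_M^{-1}$. Substituting back into \eqref{matrixtransition} at once yields $\Pi_\nu = \lambda_M^{-1}\diag(\phi)^{-1}M\diag(\phi) = \bar\Pi$.

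To identify the marginal $\nu$, I would invoke Proposition \ref{stationarychain}: $\fM^*[\nu,\nu](2N)\in\mathcal{P}(\nu,\nu)$ is Markovian with time-invariant transition $\bar\Pi$, hence $\bar\Pi^T\nu = \nu$. The factorization $\bar\Pi^N = \lambda_M^{-N}\diag(\phi)^{-1}M^N\diag(\phi)$ shows $\bar\Pi^N>0$, so $\bar\Pi$ is irreducible and aperiodic with a \emph{unique} invariant distribution, which Proposition \ref{prop} has already identified as $\bar\nu$. Therefore $\nu = \bar\nu$. The main obstacle in the whole argument is the globalization step in the second paragraph: converting an edgewise identity $\rho(t,i)=\rho(t+1,j)$ into a spacetime-wide constancy requires carefully combining the strong-connectivity information $M^N>0$ with the horizon hypothesis $N>1$, and this is the only place where both structural assumptions are used nontrivially. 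Once $\varphi(t)\propto\phi$ is pinned down, Perron-Frobenius and Proposition \ref{stationarychain} make the rest of the proof essentially automatic.
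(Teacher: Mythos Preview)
Your argument is correct and shares the paper's core idea---translate time-invariance of $\Pi_\nu$ into the constraint $E_tM = ME_{t+1}$ on the diagonal ratios $E_t = \diag(\varphi(t+1))\diag(\varphi(t))^{-1}$, and deduce that $\varphi$ must be proportional to the Perron eigenvector $\phi$---but the execution differs in two respects worth noting.

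First, for the globalization step the paper proceeds by case analysis: when $M>0$ the entrywise identity $d_i(N-1)m_{ij}=m_{ij}d_j(N)$ immediately forces each $E_t$ to be scalar, and the general case $M^N>0$ is then reduced to the positive case by comparing the $N$-step transitions $\Pi_\nu^N$ on the two halves $[0,N]$ and $[N,2N]$ (this is why the horizon is doubled). Your path-propagation argument---iterating $\rho(t,i)=\rho(t+1,j)$ along an admissible $N$-path supplied by $M^N>0$---handles both cases uniformly and makes the role of the doubled horizon more transparent. Second, to pin down $\nu$ the paper runs the dual argument on $\hat\varphi$, concludes $\hat\varphi\propto\hat\phi$, and reads off $\nu=\bar\nu$ directly from the boundary conditions \eqref{bndconditions1}--\eqref{bndconditions2}; you instead first obtain $\Pi_\nu=\bar\Pi$ and then invoke Proposition~\ref{stationarychain} together with uniqueness of the invariant distribution for the primitive matrix $\bar\Pi$. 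Both routes are valid; yours avoids touching $\hat\varphi$ at the cost of an extra appeal to Perron--Frobenius for $\bar\Pi$.

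One small caveat: your claim that Theorem~\ref{solbridge} guarantees $\varphi(t,i)>0$ for all $t,i$ is not quite what that theorem states (it only asserts nonnegativity). Strict positivity does hold here---once you know $\varphi(2N)>0$ from the boundary condition (assuming $\nu>0$), the backward recursion $\varphi(t)=M\varphi(t+1)$ and the fact that $M$ has no zero row (a consequence of $M^N>0$) propagate positivity to all $t$---but you should make that explicit, since otherwise the ratios $\rho(t,i)$ and the inversions of $D_t$ are not a priori defined.
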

\begin{proof}
 Let $\varphi_{\nu}(t)=M\varphi_{\nu}(t+1)$ be the space-time harmonic function associated to the minimizer $\fM^*[\nu,\nu]$.  Suppose first that $M$ has only positive elements and consider  times $t=N-2, N-1, N$. By (\ref{eq:notations}) and the time invariance of $\Pi_\nu$, we must have 
\begin{eqnarray}\nonumber\Pi_\nu=\diag(\varphi_\nu(N-2))^{-1}M\diag(\varphi_\nu(N-1))\\=\diag(\varphi_\nu(N-1))^{-1}M\diag(\varphi_\nu(N)).\nonumber
\end{eqnarray}
It follows that 
\[
M=D_\nu(N-1)MD_\nu(N)^{-1}, 
\]
where 
\begin{align*}
D_\nu(t)&=\diag(\varphi_\nu(t))\diag(\varphi_\nu(t-1))^{-1}\\
&=\diag (d_1^\nu(t),\ldots, d_n^\nu(t))
\end{align*}
is diagonal for all $t$.
Hence,
\[m_{ij}=e_i^TMe_j=d_i(N-1)m_{ij}d_j(N)^{-1}, \quad \forall i,j.
\]
Varying $j$ for a fixed $i$, since $m_{ij}\neq 0$, we get that $D(N)$ is a scalar matrix, say $\lambda I$, not dependent on $t$ and $\varphi(N)$ is a right eigenvector of $M$.  By the Perron-Frobenius Theorem, it follows that $\varphi(N)$ corresponds to $\lambda_M$. It readily follows that $\hat{\varphi}(0)$ is an eigenvector of $M^T$ with positive components corresponding to the same eigenvalue $\lambda_M$. By (\ref{bndconditions1})-(\ref{bndconditions2}), $\nu$ is equal to $\bar\nu$. 

A similar argument establishes the result when $M$ has merely nonnegative elements. Indeed, looking at the $N$-step transition matrix $\Pi_\nu^{(N)}=\Pi_\nu^N$ on the time intervals $[0,N]$ and $[N,2N]$ the same argument as in the full positive case gives that $\varphi(2N)$ is a right eigenvector of $M^N$ with positive components. But so is $\varphi$. By Theorem \ref{perrontheorem}, iv), they can be taken to be equal.
\end{proof}

Consider now the following special case. We have a strongly connected, aperiodic directed graph $(\mathcal V,\mathcal E)$ with vertex set $\mathcal V=\{1,2,\ldots,n\}$ and edge set $\mathcal E\subseteq \mathcal V\times\mathcal V$. Let $A$ be the adjacency matrix of the graph so that $a_{ij}=1$ if there is an edge from $i$ to $j$ and $a_{ij}=0$ otherwise. Then, there exists $N$ such that $A^N$ has all positive elements. As we shall see in the next section, the Schr\"{o}dinger bridge problem (\ref{eq:optimization_stat}) just considered with $M=A$ as prior transition turns out to have as solution the Ruelle-Bowen measure $\fM_{\rm RB}$  \cite[Section III]{delvenne2011centrality}. This probability measure has a number of useful properties, in particular it  gives the same probability to paths of the same length between any two given nodes.  All of this is discussed in the next section.

\section{The Ruelle-Bowens random walk}\label{RB}
In this section, we follow closely the beautiful paper \cite{delvenne2011centrality} by Delvenne and Libert, which explains the Ruelle-Bowens (RB) random walk. The RB random walk amounts to a Markovian evolution on a directed graph that assigns equal probabilities to all paths of equal length between any two nodes. The motivation of \cite{delvenne2011centrality} was to assign a natural invariant probability to nodes based on relations that are encoded by a graph, and thereby determine a {\em centrality measure}, akin to Google Page ranking, yet more robust and discriminating. Our motivation is quite different. The RB random walk provides a uniform distribution on paths. Therefore, it represents a natural distribution to serve as prior in the SBP in order to achieve a maximum spreading of the mass transported over the available paths. 
In this section, besides reviewing basics on the RB random walk, we show that the RB distribution is itself a solution to the Schr\"odinger bridge problem \ref{prob:optimization}.

We consider a strongly connected, directed graph 
\[\mathcal G=(\mathcal V,\mathcal E).\]
The idea in Google Page ranking the nodes is based on a random walk where a jump takes place from one node to any of its neighbors with equal probability. The alternative proposed in \cite{delvenne2011centrality} is an {\em entropy ranking}, based on the stationary distribution of the RB random walk \cite{parry1964intrinsic,ruelle2004thermodynamic}.
The transition mechanism is such that it induces a uniform distribution on {\em paths} of equal length joining any two nodes. This distribution is characterized as the one maximizing the {\em entropy rate}  \cite{cover2012elements} for the random walker.
Let us briefly recall the relevant concept. The Shannon entropy for paths of length $t$ is at most
\[
\log |\{{\rm paths\; of \;length} \;t\}|.
\]
Hence, the entropy rate is bounded by the {\em topological entropy rate}
\[
H_{\mathcal G}=\limsup_{t\rightarrow\infty}[\log |\{{\rm paths\; of \;length} \;t\}|/t].
\]
Here $|\{\cdot\}|$ denotes the cardinality of a set.
Notice that $H_{\mathcal G}$ only depends on the graph $\mathcal G$ and not on the probability distribution on paths. More specifically, if $A$ denotes the adjacency matrix of the graph, the number of paths of length $t$ is the sum of all the entries of $A^t$. Thus, it follows that $H_{\mathcal G}$ is the logarithm of the spectral radius of $A$, namely the maximum of the absolute values of the eigenvalues of $A$,
that is
\begin{align} H_{\mathcal G}= \log(\lambda_A). \label{eq:loglambda}
\end{align}

We next construct the Rulle-Bowen random walk. Let   $A$ as in the Perron-Frobenius Theorem \ref{perrontheorem} and let $u$ and $v$ be its left and right eigenvectors\footnote{We are now conforming to notation in \cite{delvenne2011centrality} for ease of comparison. Hence we use $u$ and $v$ rather than $\hat{\phi}$ and $\phi$.} with positive components corresponding to $\lambda_A$, so that
\begin{equation}\label{eq:adjacency}A^Tu=\lambda_A u, \quad Av=\lambda_A v.
\end{equation}
Suppose $u$ and $v$ are chosen so that
\[
\langle u,v\rangle:=\sum_i u_iv_i=1.
\]
As in the previous section, it is readily seen that their componentwise multiplication
\begin{equation}\label{invariantmeasure1}
\nu_{RB}(i)=u_iv_i  
\end{equation}
defines a probability distribution which is invariant under the transition matrix
\begin{equation} R=[r_{ij}], \quad r_{ij}=\frac{v_j}{\lambda_A v_i}a_{ij}.\label{Opttransition1}  
\end{equation}
that is,
\begin{equation}\label{optevol1}
R^T\nu_{RB}=\nu_{RB}.
\end{equation}
If $A$ in \eqref{eq:adjacency} represents
the adjacency matrix $A$ of a graph, then
the transition matrix $R$ in (\ref{Opttransition1}) together with the stationary measure $\nu_{RB}$ in (\ref{invariantmeasure1}), define the {\em Ruelle-Bowen path measure}
\begin{equation}\label{uniformmeasure}
\fM_{\rm RB}(x_0,x_1,\ldots,x_N):=\nu_{RB}(x_0)r_{x_0x_1}\cdots r_{x_{N-1}x_N}.
\end{equation}
\begin{proposition} \label{uniformpath}The measure $\fM_{\rm RB}$ {\em (\ref{uniformmeasure})} assigns probability $\lambda_A^{-t}u_iv_j$ to any path of length $t$ from node $i$ to node $j$.
\end{proposition}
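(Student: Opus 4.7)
The plan is to substitute the definitions of $\nu_{RB}$ and the transition entries $r_{ij}$ directly into the formula \eqref{uniformmeasure} for $\fM_{\rm RB}$ along a fixed path $x_0=i, x_1, x_2, \ldots, x_{t-1}, x_t=j$, and then observe that the resulting expression collapses by telescoping.

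First, I would write out
\[
\fM_{\rm RB}(i, x_1,\ldots,x_{t-1}, j) = u_i v_i \cdot \frac{v_{x_1}}{\lambda_A v_i} a_{i x_1} \cdot \frac{v_{x_2}}{\lambda_A v_{x_1}} a_{x_1 x_2} \cdots \frac{v_j}{\lambda_A v_{x_{t-1}}} a_{x_{t-1} j}.
\]
Next I would separate the three types of factors. The $\lambda_A^{-1}$'s appear exactly once per transition, of which there are $t$, contributing a global factor $\lambda_A^{-t}$. The $a_{x_k x_{k+1}}$ factors are all equal to $1$ because we are restricting to an actual path in $\mathcal G$ (so every edge used is present in the adjacency matrix); if any $a_{x_k x_{k+1}}$ were zero the path would not exist and the claim would be trivially about an impossible path. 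The $v$-factors telescope: the product $v_i \cdot (v_{x_1}/v_i)(v_{x_2}/v_{x_1})\cdots(v_j/v_{x_{t-1}})$ collapses to $v_j$, since all the positivity of $v$ guaranteed by Perron--Frobenius ensures no division-by-zero issues arise. What remains untouched is the $u_i$ coming from $\nu_{RB}(x_0)=u_i v_i$.

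Combining these three observations gives
\[
\fM_{\rm RB}(i,x_1,\ldots,x_{t-1},j) = \lambda_A^{-t}\, u_i\, v_j,
\]
which is independent of the intermediate vertices $x_1,\ldots,x_{t-1}$, proving the claim. I do not anticipate any real obstacle here; the proof is essentially a bookkeeping exercise, and the only point worth flagging is the use of Perron--Frobenius (Theorem \ref{perrontheorem}) to guarantee $v_k>0$ for every $k\in\mathcal V$, so that the telescoping cancellation is legitimate. A useful immediate corollary, which one might mention in passing, is that summing $\lambda_A^{-t}u_iv_j$ over all such paths recovers $\lambda_A^{-t}(A^t)_{ij}\cdot$(constant) and is consistent with $H_{\mathcal G}=\log\lambda_A$ in \eqref{eq:loglambda}.
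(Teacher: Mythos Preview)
Your proof is correct and follows the same approach as the paper: direct substitution of $\nu_{RB}$ and the transition entries into the path probability, followed by telescoping of the $v$-factors. The paper presents the cases $t=1$ and $t=2$ explicitly and then writes ``and so on,'' whereas you handle the general $t$ in one line; the content is identical.
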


\begin{proof} Starting from the stationary distribution \eqref{invariantmeasure1}, and in view of \eqref{Opttransition1}, the probability of a path $ij$ is
\[
u_iv_i \left(\frac{1}{\lambda_A} v_i^{-1}v_j\right)=\frac{1}{\lambda_A}u_iv_j,
\]
assuming that node $j$ is accessible from node $i$ in one step.
Likewise, the probability of the path $ijk$ is
\[
u_iv_i \left(\frac{1}{\lambda_A} v_i^{-1}v_j\right)\left(\frac{1}{\lambda_A} v_j^{-1}v_k\right)=\frac{1}{\lambda_A^2}u_iv_k
\]
independent of the intermediate state $j$, and so on.
Thus, the claim follows.
\end{proof}


The striking property of $\fM_{\rm RB}$ is that induces a uniform probability measure on paths of equal length between any two given nodes. We quote from \cite{delvenne2011centrality} ``Since the number of paths of length $t$ is of the order of $\lambda_A^t$ (up to a factor) the distribution on paths of fixed length is uniform up to a factor (which does not depend on $t$). Hence the Shannon entropy of paths of length $t$ grows as $t\log\lambda_A$, up to an additive constant. The entropy rate of this distribution is thus $\log\lambda_A$ which is optimal'' by the expression for $H_{\mathcal G}$ in \eqref{eq:loglambda}.

The analysis also shows  that the Ruelle-Bowen distribution is the solution of the particular SBP where the ``prior" transition mechanism is given by the adjacency matrix! This observation is apparently new and beautifully links the topological entropy rate to a maximum entropy problem on path space.
We state next this special case of Proposition \ref{prop}.

\begin{proposition}\label{prop2}
Let $A$ be the adjacency matrix of a strongly connected  aperiodic graph $\mathcal G$. Let $\fM$ the nonnegative measure on $\mathcal X^{N+1}$ given by \eqref{prior} with $M=A$ and $\mu_0$ satisfying \eqref{eq:mupositive}. Then, the Ruelle-Bowen measure $\fM_{\rm RB}$ \eqref{uniformmeasure} solves the SBP (\ref{eq:optimization_stat}) with marginals $\nu_0=\nu_N=\nu_{RB}$.
\end{proposition}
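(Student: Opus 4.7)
The plan is to verify that Proposition \ref{prop2} is an immediate specialization of Proposition \ref{prop}, obtained by matching the Perron-Frobenius data of the adjacency matrix $A$ with the generic data $(\lambda_M,\phi,\hat\phi,\bar\nu,\bar\Pi)$ appearing in the earlier statement. The work is therefore almost entirely a translation of notation; the only genuine verification is that the hypotheses of Proposition \ref{prop} apply when $M=A$.

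First I would check that the hypothesis on $A^N$ in Proposition \ref{prop} is met. Since $\mathcal G$ is strongly connected and aperiodic, a standard consequence of Perron-Frobenius theory for irreducible nonnegative matrices guarantees the existence of $N$ such that $A^N$ has all strictly positive entries. Under that hypothesis, Theorem \ref{perrontheorem} produces a simple positive eigenvalue $\lambda_A$ together with strictly positive right and left eigenvectors $v$ and $u$, unique up to scale and normalized by $\langle u,v\rangle=1$. Setting $\lambda_M=\lambda_A$, $\phi=v$ and $\hat\phi=u$ in Proposition \ref{prop} then yields the distinguished measure
\[
\bar\nu(x)=\phi(x)\hat\phi(x)=v_x u_x=\nu_{RB}(x),
\]
which agrees with \eqref{invariantmeasure1}.

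Next I would read off the time-invariant transition matrix of the bridge. Formula \eqref{stationarymatrixopttransition} specializes to
\[
\bar\Pi=\lambda_A^{-1}\diag(v)^{-1}A\diag(v),
\]
whose $(i,j)$-entry is exactly $r_{ij}=\frac{v_j}{\lambda_A v_i}a_{ij}$ as in \eqref{Opttransition1}. Consequently, the unique minimizer $\fM^*[\nu_{RB},\nu_{RB}]$ delivered by Theorem \ref{solbridge} assigns to a path $(x_0,\ldots,x_N)$ the value $\nu_{RB}(x_0)\,r_{x_0x_1}\cdots r_{x_{N-1}x_N}$, which is precisely $\fM_{\rm RB}(x_0,\ldots,x_N)$ by definition \eqref{uniformmeasure}.

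Because the argument is essentially a relabelling of the generic statement, I do not anticipate any genuine obstacle. The one subtle point worth spelling out is that aperiodicity together with strong connectivity really do imply entrywise positivity of some power of $A$, which legitimizes both the appeal to Theorem \ref{perrontheorem} and the hypothesis of Proposition \ref{prop}; everything else reduces to identifying the Perron eigendata of $A$ with $(\lambda_M,\phi,\hat\phi)$.
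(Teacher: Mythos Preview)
Your proposal is correct and follows exactly the approach the paper intends: the paper states Proposition~\ref{prop2} as a special case of Proposition~\ref{prop} without further proof, and your argument carries out precisely that specialization by identifying $(\lambda_M,\phi,\hat\phi,\bar\nu,\bar\Pi)$ with $(\lambda_A,v,u,\nu_{RB},R)$.
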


\section{Robust transport over networks}\label{RTG}
Once again we consider a strongly connected, directed graph $\mathcal G=(\mathcal V,\mathcal E)$ with $n$ vertices. We identify node $1$ as a {\em source} and node $n$ as a {\em sink} and seek to transport a unit mass from $1$ to $n$ in at most $N$ steps. The task is formalized by setting an initial marginal distribution $\nu_0(x)=\delta_{1x}(x)$ Kronecker's delta. Similarly, the final distribution is $\nu_N(x)=\delta_{nx}(x)$. Generally, we seek a transportation plan which is {\em robust} and avoids {\em congestion} as much as the topology of the graph permits. This latter feature of the transportation plan will be achieved in this section indirectly, without explicitly bringing into the picture the capacity of each edge (this is done in Section \ref{weighted}). With these two key specifications in mind, we like to control the flux so that the initial mass {\em spreads as much as possible} on the feasible paths joining vertices $1$ and $n$ in $N$ steps before reconvening at time $N$ in vertex $n$. We shall achieve this by constructing a suitable Markovian transition mechanism. As we want to allow for the possibility that all or part of the mass reaches node $n$ at some time less than $N$, we always include a loop in node $n$ so that our adjacency matrix $A$ always has $a_{nn}=1$. We observed in the previous section that the Ruelle-Bowen  $\fM_{\rm RB}$ measure on paths can be obtained as the solution of the maximum entropy problem when the ``prior transition matrix" is the adjacency matrix. Since $\fM_{\rm RB}$ gives equal probability to paths joining two specific vertices, it is natural to use it as a prior in a new maximum entropy problem with marginals $\delta_{1x},\delta_{nx}$ so as to achieve the spreading of the probability mass on the feasible paths joining the source with the sink. Thus, we consider the following maximum entropy problem
\begin{problem}\label{prob:secondmaxent}
Determine
 \begin{equation}\label{eq:secondmaxent}
\fM^*[\delta_{1x},\delta_{nx}]={\rm argmin}\{ \bbD(P\|\fM_{\rm RB}) |  P\in {\mathcal P}(\delta_{1x},\delta_{nx})
\}.\nonumber
\end{equation}
\end{problem}

By Theorem \ref{solbridge}, the optimal, time varying transition matrix $\Pi^*(t)$ of the above problem is given, recalling the notations in \eqref{eq:notations}, by
\begin{equation}\label{matrixopttransition}
\Pi^*(t)=\diag(\varphi(t))^{-1}R \diag(\varphi(t+1)),
\end{equation}
where
\[
	\varphi(t)=R\varphi(t+1), ~\hat{\varphi}(t+1)=R^T\hat{\varphi}(t),
\]
with the boundary conditions
\begin{equation}\label{secondbndconditions}
\varphi(0,x)\hat{\varphi}(0,x)=\delta_{1x}(x),\;\varphi(N,x)\hat{\varphi}(N,x)=\delta_{nx}(x)
\end{equation}
for all $x\in\mathcal X$. In view of (\ref{Opttransition1}), if we define
\[
\varphi_v(t):=\lambda_A^{-t}\diag(v)\varphi(t), \quad \hat{\varphi}_v(t):=\lambda_A^t \diag(v)^{-1}\hat{\varphi}(t),
\]
then we have
\[
\varphi_v(t)=A\varphi_v(t+1), ~ \hat{\varphi}_v(t+1)=A^T\hat{\varphi}_v(t),~t=0,\ldots,N-1.
\]
Moreover,
\[
\varphi_v(t,x)\hat{\varphi}_v(t,x)=\varphi(t,x)\hat{\varphi}(t,x),~ t=0,\ldots,N-1,~ x\in\mathcal X.
\]
Here, again, $A$ is the adjacency matrix of $\cG$ and $v$ is the right eigenvector corresponding to the spectral radius $\lambda_A$.

The above analysis provides another interesting way to express $\fM^*[\delta_{1x},\delta_{nx}]$; it also solves the Schr\"odinger bridge problem with the same marginals $\delta_{1x}$ and $\delta_{nx}$ while different prior transition matrix $A$, the adjacency matrix. Thus, we can replace the two-step procedure by a single bridge problem. This is summarized in the following proposition.
\begin{proposition}\label{prop3}
Let $A$ be the adjacency matrix of a strongly connected aperiodic graph $\mathcal G$, $\fM$ the nonnegative measure on $\mathcal X^{N+1}$ given by \eqref{prior} with $M=A$ and $\mu_0$  satisfying \eqref{eq:mupositive}, then, the solution $\fM^*[\delta_{1x},\delta_{nx}]$ of Problem \ref{prob:secondmaxent} also solves the Schr\"{o}dinger bridge problem
 \begin{equation}\label{onestepSBP}
{\rm min}\{ \bbD(P\|\fM) |  P\in {\mathcal P}(\delta_{1x},\delta_{nx})
\}.
\end{equation}
\end{proposition}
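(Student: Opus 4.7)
The plan is to bypass any re-derivation of the Schrödinger system and instead exhibit a direct functional relationship between the two objective functionals $\bbD(\cdot\|\fM)$ and $\bbD(\cdot\|\fM_{\rm RB})$ on the constraint set $\mathcal P(\delta_{1x},\delta_{nx})$. Specifically, I aim to show that the two functionals differ by a constant on this set, so that they share the same (unique) minimizer. This reduces the proposition to a short Radon--Nikodym computation.

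For the first step, using $r_{ij}=v_j/(\lambda_A v_i)\,a_{ij}$ and $\nu_{RB}(x_0)=u_{x_0}v_{x_0}$, the $v$-factors telescope along a path and yield
\[
\fM_{\rm RB}(x_0,\ldots,x_N)=\lambda_A^{-N}\,u_{x_0}\,v_{x_N}\prod_{t=0}^{N-1}a_{x_tx_{t+1}} .
\]
Dividing by $\fM(x_0,\ldots,x_N)=\mu_0(x_0)\prod_{t=0}^{N-1}a_{x_tx_{t+1}}$ gives
\[
\frac{\fM_{\rm RB}(x_0,\ldots,x_N)}{\fM(x_0,\ldots,x_N)}=\frac{u_{x_0}\,v_{x_N}}{\lambda_A^{N}\,\mu_0(x_0)}
\]
on the common support. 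Since $u,v>0$ componentwise (Perron--Frobenius) and $\mu_0$ is strictly positive by \eqref{eq:mupositive}, the supports of $\fM$ and $\fM_{\rm RB}$ coincide, and the ratio is well defined and strictly positive everywhere on this support.

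For the second step, observe that any $P\in\mathcal P(\delta_{1x},\delta_{nx})$ is supported on paths with $x_0=1$ and $x_N=n$, so on $\support(P)$ the ratio above collapses to the constant $c:=u_1 v_n/(\lambda_A^{N}\mu_0(1))$. Taking logs and integrating against $P$ gives $\bbD(P\|\fM)=\bbD(P\|\fM_{\rm RB})+\log c$ for every $P\in\mathcal P(\delta_{1x},\delta_{nx})$. Since the two objectives differ by an additive constant on the feasible set, they have the same argmin; feasibility of \eqref{onestepSBP} is guaranteed because $A^N$ has only positive entries, so a length-$N$ path from $1$ to $n$ exists and $\bbD(P\|\fM)$ is finite for some admissible $P$. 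Uniqueness of the minimizer in Theorem \ref{solbridge} then identifies $\fM^*[\delta_{1x},\delta_{nx}]$ as the solution of \eqref{onestepSBP} as well. The only mildly delicate point is the bookkeeping ensuring that the Radon--Nikodym derivative is constant on the support of every admissible $P$, but this is immediate once the endpoints are pinned by Dirac marginals.
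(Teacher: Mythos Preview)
Your argument is correct and takes a genuinely different route from the paper. The paper proves the proposition by working at the level of the Schr\"odinger system: it takes the potentials $(\varphi,\hat\varphi)$ that solve \eqref{eq:Schroedingersystem} for the prior transition $R$, rescales them via $\varphi_v(t)=\lambda_A^{-t}\diag(v)\varphi(t)$ and $\hat\varphi_v(t)=\lambda_A^{t}\diag(v)^{-1}\hat\varphi(t)$, and checks that $(\varphi_v,\hat\varphi_v)$ solves the Schr\"odinger system for the prior transition $A$ with the \emph{same} boundary products $\varphi_v\hat\varphi_v=\varphi\hat\varphi$. Since the optimal transition \eqref{opttransition1} built from either pair is the same, the two bridges coincide. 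Your approach bypasses the potentials entirely: you compute the path-space Radon--Nikodym derivative $\fM_{\rm RB}/\fM$ and observe it depends only on the endpoints, so on any $\mathcal P(\nu_0,\nu_N)$ the two objectives differ by an additive constant. This is more elementary and makes transparent why the result holds for \emph{any} pair of marginals, not just Diracs (a point the paper notes in passing at the end of Section~\ref{weighted}). The paper's route, on the other hand, explicitly displays the relation between the potentials for the two problems, which is what one actually iterates in the algorithm of \cite{georgiou2015positive}; so it has computational content that your argument does not directly provide.
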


The iterative algorithm of \cite[Section III]{georgiou2015positive} can now be based on (\ref{onestepSBP}) to efficiently compute the transition matrix of the optimal robust transport plan $\fM^*[\delta_{1x},\delta_{nx}]$.

\begin{remark}
Finally, observing that if $A^N$ has also zero elements, the robust transport described in this section may still be feasible provided there is at least one path of length $N$ joining node $1$ with node $n$, i.e., $(A^N)_{1n}>0$.
\end{remark}

As we discussed in the beginning of this section, the intuition to use $\fM_{RB}$ as a prior is to achieve the spreading of the probability on all the feasible paths connecting the source and the sink. It turns out this is in deeded the case; the solution $\fM^*[\delta_{1x},\delta_{nx}]$ of Problem \ref{prob:secondmaxent} assigns equal probability to all the feasible paths of lengths $N$ joining the source $1$ with the sink $n$. Too see this, by \eqref{matrixopttransition}, the probability of the optimal transport plan $\fM^*[\delta_{1x},\delta_{nx}]$ assigns on path $x=(x_0,\,x_1,\ldots,x_N)$ is
\begin{eqnarray*}
	\fM^*[\delta_{1x},\delta_{nx}](x)\!\!&=&\!\!\delta_{1x}(x_0)\!\!\prod_{t=1}^{N-1} r_{x_t x_{t+1}}\!\!\frac{\varphi(t+1,x_{t+1})}{\varphi(t,x_t)}
	\\\!\!&=&\!\!\delta_{1x}(x_0)\frac{\varphi_v(N,x_N)}{\varphi_v(0,x_0)}\prod_{t=1}^{N-1} a_{x_t x_{t+1}}.	
\end{eqnarray*}
Observing that $\prod_{t=1}^{N-1} a_{x_t x_{t+1}}=1$ for feasible path and $0$ otherwise, and $\delta_{1x}(x_0)\varphi_v(N,x_N)/\varphi_v(0,x_0)$ depends only on the boundary points $x_0, x_N$, we conclude that $\fM^*[\delta_{1x},\delta_{nx}]$ assigns equal probability to all the feasible paths. Moreover, there are $(A^N)_{1n}$ feasible paths of length $N$ connecting nodes $1$ and $n$. Thus we establish the following.
\begin{proposition}\label{uniformfeasiblepaths}
$\fM^*[\delta_{1x},\delta_{nx}]$ assigns probability $1/(A^N)_{1n}$ to each of all the feasible paths of length $N$ connecting $1$ and $n$.
\end{proposition}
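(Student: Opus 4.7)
The plan is to start from the explicit product form of $\fM^*[\delta_{1x},\delta_{nx}]$ guaranteed by Theorem \ref{solbridge}, apply the substitution that turns the RB-prior bridge into an $A$-prior bridge (Proposition \ref{prop3}), and exploit two clean cancellations: a telescoping of the harmonic factors $\varphi_v(t+1,\cdot)/\varphi_v(t,\cdot)$ along the path, and a telescoping of the eigenvector ratios inside $r_{ij}$.

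First I would write
\[
\fM^*[\delta_{1x},\delta_{nx}](x_0,\ldots,x_N)
=\delta_{1x}(x_0)\prod_{t=0}^{N-1}\pi^*_{x_tx_{t+1}}(t),
\]
substitute $\pi^*_{ij}(t)=r_{ij}\,\varphi(t+1,j)/\varphi(t,i)$ from \eqref{matrixopttransition}, and collapse the telescoping product of $\varphi$ ratios into $\varphi(N,x_N)/\varphi(0,x_0)$. Then I would change variables to $\varphi_v$ using the substitution introduced just before Proposition \ref{prop3}, which absorbs the $\lambda_A$-powers and the $v$-factors in the $r_{ij}$'s, leaving
\[
\fM^*[\delta_{1x},\delta_{nx}](x)
=\delta_{1x}(x_0)\,\frac{\varphi_v(N,x_N)}{\varphi_v(0,x_0)}\prod_{t=0}^{N-1}a_{x_tx_{t+1}}.
\]

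Next I would observe that $\prod_{t=0}^{N-1}a_{x_tx_{t+1}}$ equals $1$ for feasible paths (all edges present in $\mathcal{E}$) and vanishes otherwise, and that the prefactor $\delta_{1x}(x_0)\varphi_v(N,x_N)/\varphi_v(0,x_0)$ depends only on the endpoints $(x_0,x_N)=(1,n)$. Therefore every feasible length-$N$ path from $1$ to $n$ receives the same probability, say $c$. Finally, I would invoke the normalization $\sum_x \fM^*[\delta_{1x},\delta_{nx}](x)=1$ together with the elementary fact that the number of length-$N$ paths from $1$ to $n$ equals $(A^N)_{1n}$, to conclude $c=1/(A^N)_{1n}$.

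The calculation is essentially bookkeeping, so there is no serious obstacle; the only thing to be careful with is the feasibility hypothesis $(A^N)_{1n}>0$, which is exactly the condition flagged in the preceding Remark guaranteeing that the Schrödinger system (\ref{eq:Schroedingersystem}) admits a solution with the prescribed Dirac marginals (so that $\varphi_v(0,1)$ and $\hat\varphi_v(N,n)$ are positive and the ratio in the prefactor is well-defined). Everything else is direct substitution and normalization.
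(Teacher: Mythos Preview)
Your proposal is correct and follows essentially the same route as the paper: write the bridge probability in product form via \eqref{matrixopttransition}, telescope the $\varphi$-ratios, pass to the $\varphi_v$ variables so that the $r_{ij}$'s reduce to $a_{ij}$'s, observe the endpoint-only dependence of the prefactor, and count feasible paths via $(A^N)_{1n}$. The only cosmetic differences are that you index the product over $t=0,\ldots,N-1$ (which is the correct range; the paper's displayed product starting at $t=1$ is a typo) and that you make the normalization step explicit, whereas the paper leaves it implicit.
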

 
\section{Generalization: Not strongly connected and weighted graphs} \label{weighted}
Consider again a directed graph $\mathcal G=(\mathcal V,\mathcal E)$ with $n$ vertices. We associate to the edge $ij$ an ``energy" $U_{ij}\ge 0$. We study the following two specific cases (and their combination):
 
\noindent a) Graphs that are not strongly connected:
We consider the same problem as in the previous section but the graph is not strongly connected. Following \cite{delvenne2011centrality}, we can give a large positive energy $U_0$ to non existing links (this kind of ``teleportation" is employed in the random walk of the Google Page rank algorithm to avoid getting stuck in  absorbing states) and energy $U_{ij}=0$ to existing links. Then the adjacency matrix $A$ is replaced by the matrix 
 \[
 	B=[b_{ij}]= \left[\exp(-U_{ij})\right].
\]
  The matrix $B$ has all positive elements. Hence, we can apply the Perron-Forbenius theorem.  Let $u$ and $v$ be left and right eigenvectors with positive components of the matrix $B$ corresponding to the spectral radius $\lambda_B$ of $B$, so that
$$B^Tu=\lambda_B u, \quad Bv=\lambda_B v.
$$ 
Suppose that $u$ and $v$ are chosen so that $\langle u,v\rangle=\sum_i u_iv_i=1$. Then $\mu_U$ given by
\begin{equation}\label{invariantmeasure}
\mu_U(i)=u_i\cdot v_i
\end{equation}
is a probability distribution which is invariant for the transition matrix
\begin{equation}\label{opttransition}R_U=\lambda_B^{-1}\diag(v)^{-1}B\diag(v), 
\end{equation}
namely
\begin{equation}\label{optevol}
R_U^T\mu_U=\mu_U.
\end{equation}
The corresponding path space measure $\fM_U$ is no longer uniform on paths of equal length. Indeed, the probability of the path $(i=x_0,x_1,\ldots, x_{t-1},j=x_t)$ is 
\[
	\lambda_B^{-t}\exp (-\sum_{\ell=0}^{t-1}U_{x_\ell x_{\ell+1}})u_iv_j.
\]  
However, it is the minimum {\em free energy rate}  (topological pressure in thermodynamics) distribution attaining the maximum value of $-F=-\bar{U}+S$
given by $\log\lambda_B$ and has therefore the form of a {\em Boltzmann distribution}, see \cite[Section IV]{delvenne2011centrality} for details. Notice that, as soon as there are virtual links, $B\neq A$. By statement v) in Theorem \ref{perrontheorem}, we then have $\log\lambda_A<\log\lambda_B$. Namely, the topological entropy has increased in accordance to our intuition. The expected total path energy of a path of length $t$ is precisely $t\cdot\bar{U}$. \\

Again, as in Proposition \ref{prop2}, we have a special case of Proposition \ref{prop}. Namely, the measure $\fM_U$ is the solution of  a SBP where the prior $\fM$ is a Markovian measure on $\mathcal X^{N+1}$ as in (\ref{prior}) but with transition mechanism given by $M=B$. If $U_0$ is very large, most of the transportation will occur on the real edges.  We can then take $\fM_U$  as the prior distribution in a maximum entropy problem as in Section \ref{RTG} obtaining again through the solution $\fM^*_U[\delta_{1x},\delta_{nx}]$ a robust transportation plan from node $1$ to node $n$. 

\noindent b) Weighted graphs:
The quantities $U_{ij}$ may represent the cost of transporting a unit of mass on that edge or may be inversely proportional to capacity of the link,  etc. The measure $\fM_U$  in this case may be far from uniform since it takes into account costs/capacities of the links. Again we can set up a maximum entropy problem with $\fM_U$  as prior obtaining a transport $\fM^*_U$ which compromises between the need to be robust and the cost/capacities of the different paths joining the source and the sink. For instance, if $U_{ij}=c_{ij}$, the cost of transporting a unit of goods on the link $ij$, is large, the solution to the maximum entropy problem with send less mass through this link provided the topology of the graph allows for alternative routes. In this case, low cost and robustness of the transportation plan may be effectively conjugated.  Indeed, we have the following striking result which generalizes Propositions \ref{uniformpath} and  \ref{uniformfeasiblepaths}.

\begin{theorem}\label{robustOMT} $\fM^*[\delta_{1x},\delta_{nx}](x)$ assigns equal probability to paths $x\in\mathcal X^{N+1}$ of equal cost. In particular, it assigns maximum and equal probability to minimum cost paths.
\end{theorem}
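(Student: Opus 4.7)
The plan is to compute $\fM^*[\delta_{1x},\delta_{nx}](x)$ explicitly and show that it factors as a constant, depending only on the endpoints $1$ and $n$, times $\exp(-C(x))$, where $C(x):=\sum_{t=0}^{N-1} U_{x_t x_{t+1}}$ is the total cost/energy along the path. Once this factorization is established, the two claims of the theorem are immediate: the probability depends on $x$ only through $C(x)$, so equi-cost paths are equiprobable, and the exponential is maximized precisely on the minimum-cost paths.

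First I would reduce the two-step construction (compute $\fM_U$ via the spectral data of $B$, then solve the SBP with $\fM_U$ as prior) to a single Schr\"odinger bridge whose prior is the measure built from $M=B=[\exp(-U_{ij})]$ via \eqref{prior}, in perfect analogy with Proposition \ref{prop3}. The justification mirrors the argument used there: since $R_U=\lambda_B^{-1}\diag(v)^{-1}B\diag(v)$, the Schr\"odinger potentials $\varphi$ and $\hat\varphi$ relative to the two priors differ only by a time-dependent diagonal rescaling (the factors $\lambda_B^{t}$ and $\diag(v)$), which drops out of the ratio defining the optimal one-step transition in \eqref{opttransition1}. Hence I may as well work with prior transition matrix $B$ throughout, and Theorem \ref{solbridge} applies since $B$ has all positive entries in case (a) and, in case (b), a suitable positivity/feasibility condition of the form $(B^N)_{1n}>0$ is assumed.

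With $B$ as prior, Theorem \ref{solbridge} gives $\pi^*_{ij}(t)=b_{ij}\varphi(t+1,j)/\varphi(t,i)$ for suitable potentials $\varphi,\hat\varphi$ meeting the boundary conditions \eqref{secondbndconditions}, so
\[
\fM^*[\delta_{1x},\delta_{nx}](x)=\delta_{1x}(x_0)\prod_{t=0}^{N-1} b_{x_t x_{t+1}}\,\frac{\varphi(t+1,x_{t+1})}{\varphi(t,x_t)}.
\]
The $\varphi$-ratios telescope to $\varphi(N,x_N)/\varphi(0,x_0)$, which on every admissible path with $x_0=1$, $x_N=n$ equals the fixed positive constant $K:=\varphi(N,n)/\varphi(0,1)$. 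Substituting $b_{ij}=\exp(-U_{ij})$ turns the remaining product into $\exp(-C(x))$, giving $\fM^*[\delta_{1x},\delta_{nx}](x)=K\exp(-C(x))$ on admissible paths and $0$ otherwise.

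The main delicate point is the reduction step: one must verify that replacing $\fM_U$ by the $B$-induced measure as prior preserves the minimizer, and that the boundary conditions \eqref{secondbndconditions} determine $\varphi,\hat\varphi$ uniquely up to the usual reciprocal scalar, so that $K$ is unambiguously defined and strictly positive. Everything past that is a routine telescoping manipulation. The statement specializes to Proposition \ref{uniformfeasiblepaths} when $U_{ij}\equiv 0$ on all edges of a strongly connected graph, and it makes quantitative the announced compromise between robustness (entropic spreading over all feasible paths of a given cost) and transportation economy (exponential tilting toward cheaper paths).
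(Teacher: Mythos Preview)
Your proposal is correct and follows essentially the same approach as the paper's own proof: both compute $\fM^*[\delta_{1x},\delta_{nx}](x)$ via Theorem~\ref{solbridge}, telescope the $\varphi$-ratios to a constant depending only on the endpoints, and recognize the remaining product $\prod b_{x_t x_{t+1}}$ as $\exp(-C(x))$. The paper's version is terser---it writes the formula in terms of the rescaled potentials $\varphi_v$ (implicitly carrying over the $A\to B$ reduction argument from Proposition~\ref{prop3} and Section~\ref{RTG}) and immediately reads off the conclusion---whereas you spell out the reduction from the $\fM_U$ prior to the $B$ prior more explicitly; but the substance is identical.
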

\begin{proof}For a path $x=(x_0,\,x_1,\ldots,x_N)$, we have
\begin{eqnarray}\nonumber\fM^*[\delta_{1x},\delta_{nx}](x)=\delta_{1x}(x_0)\frac{\varphi_v(N,x_N)}{\varphi_v(0,x_0)}\prod_{t=1}^{N-1} b_{x_t x_{t+1}}\\=\delta_{1x}(x_0)\frac{\varphi_v(N,x_N)}{\varphi_v(0,x_0)}\exp[-\sum_{t=1}^{N-1} U_{x_t x_{t+1}}].
\end{eqnarray}
Observe once more that $\delta_{1x}(x_0)\frac{\varphi_v(N,x_N)}{\varphi_v(0,x_0)}$ does not depend on the particular path joining $x_0$ and $x_N$. Since $\sum_{t=1}^{N-1} U_{x_t x_{t+1}}$ is the total cost of the path, the conclusion now follows. 
\end{proof}

In the discrete optimal mass transport (OMT) problem, one usually (e.g., see \cite{rachev1998mass}) seeks to first identify the least costly path(s) $(x_0,x_1^*,\ldots,x_{N-1}^*,x_N)$ from any starting node $x_0\in\mathcal X$ to any ending node $x_N$,
along with the corresponding end-point cost for a unit mass\footnote{We assume a self loop for each node with zero cost, i.e., $U_{xx}=0$ for each $x\in\mathcal X$.},
\[
C_{x_0x_N}=\min_{x_1^*,\ldots,x_{N-1}^*} \left(U_{x_0x_1^*}+\ldots+U_{x_{N-1}^*x_N}\right).
\]
This is a {\em combinatorial problem} but can also be cast as a linear program \cite{bazaraa2011linear}.
Having a solution to this first problem, the OMT problem can then be recast as the linear program
\begin{align}\label{eq:MKlinearprogram}
&\min_q\left\{\sum_{x_0,x_N} q_{x_0,x_N} C_{x_0x_N}\mid q_{x_0,x_N}\geq 0,\right.\\
&\hspace*{1.5cm}\left.\sum_{x_0} q_{x_0,x_N} = \nu_N(x_N),\nonumber
\sum_{x_N} q_{x_0,x_N} = \nu_0(x_0)\right\}.
\end{align}
The solution to \eqref{eq:MKlinearprogram} is the transport plan $q_{x_0,x_N}$ which dictates the portion of mass that is to be sent from $x_0$ to $x_N$ along the corresponding least costly path $(x_0,x_1^*,\ldots,x_{N-1}^*,x_N)$. Alternatively, the OMT problem can be directly cast as a linear program in as many variables as there are edges  \cite{bazaraa2011linear}.

An apparent
shortcoming of the OMT formalism is the ``rigidity'' of the transportation to utilize only paths with minimal cost
from starting to ending node. The transport provided by Theorem \ref{robustOMT}, which readily generalizes to any two marginals $\nu_0$ and $\nu_N$, provides an attractive alternative to the OMT approach: Minimum cost paths all have maximum probability, but some of the mass is also transported on alternative paths thereby ensuring a certain amount of robustness of the transportation plan. Also notice that the Schr\"odinger bridge measure $\fM^*_U[\delta_{1x},\delta_{nx}]$ determines, as a by-product, the minimum cost paths!

The argument provided at the end of the previous section (see Proposition \ref{prop3}) shows once more that $\fM^*_U[\delta_{1x},\delta_{nx}]$ can be obtained in both of the above cases in one step as solution to the Schr\"odinger bridge problem with the same marginals $\delta_{1x}$ and $\delta_{nx}$ and prior transition matrix $B$. 

All problems considered in this and in the previous section may be solved in the same way if the initial and/or the final mass is spread over several nodes.

\begin{figure}[h]
\begin{center}
\includegraphics[width=0.43\textwidth]{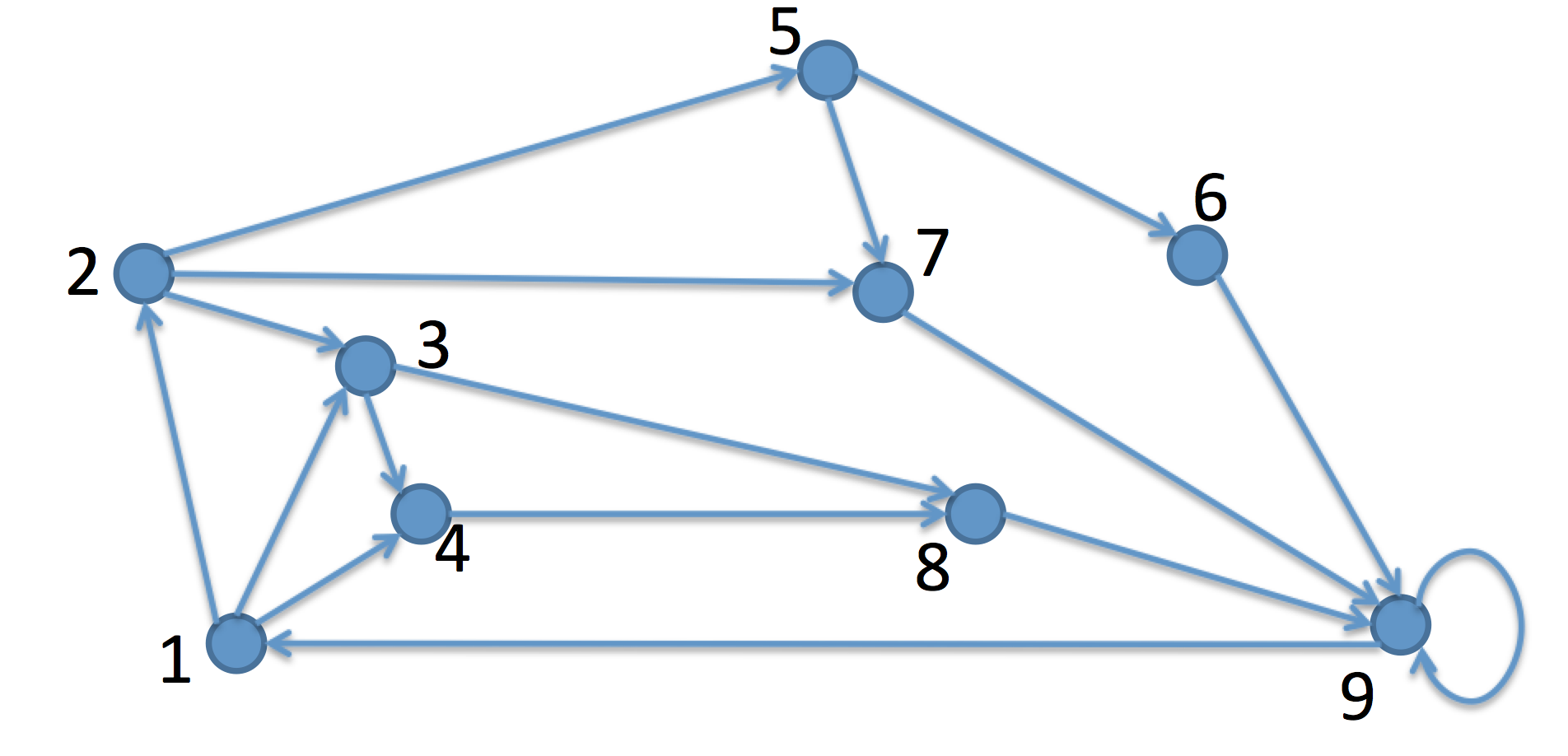}
\caption{Network topology}
\label{fig:graph}
\end{center}
\end{figure}

\section{Examples}\label{examples}
We present a simple academic example to illustrate our method. Consider the graph in Figure \ref{fig:graph} with the following adjacency matrix
 \[
       A=\tiny \left[
       \begin{matrix}
       0 & 1 & 1 & 1 & 0 & 0 & 0 & 0 & 0\\
       0 & 0 & 1 & 0 & 1 & 0 & 1 & 0 & 0\\
       0 & 0 & 0 & 1 & 0 & 0 & 0 & 1 & 0\\
       0 & 0 & 0 & 0 & 0 & 0 & 0 & 1 & 0\\
       0 & 0 & 0 & 0 & 0 & 1 & 1 & 0 & 0\\
       0 & 0 & 0 & 0 & 0 & 0 & 0 & 0 & 1\\
       0 & 0 & 0 & 0 & 0 & 0 & 0 & 0 & 1\\
       0 & 0 & 0 & 0 & 0 & 0 & 0 & 0 & 1\\
       1 & 0 & 0 & 0 & 0 & 0 & 0 & 0 & 1
       \end{matrix}
       \right].
\]
We seek to transport a unit mass from node $1$ to node $9$ in $N =3$ and $4$ steps. We add a self loop at node $9$, i.e., $a_{99}=1$, to allow for transport paths with different step sizes.

The shortest path from node $1$ to $9$ is of length $3$ and there are three such paths, which are
$1-2-7-9$, $1-3-8-9$ and $1-4-8-9$. If we want to transport the mass with minimum number of  steps, we may end up using one of these three paths. This is not so robust. On the other hand, if we apply the Schr\"odinger bridge framework with the RB measure $\fM_{\rm RB}$ as the prior, then we get a transport plan with equal probabilities using all these three paths. The evolution of mass distribution is given by
 \[
      \footnotesize
       \left[
       \begin{matrix}
       1 & 0 & 0 & 0 & 0 & 0 & 0 & 0 & 0\\
       0 & 1/3 & 1/3 & 1/3 & 0 & 0 & 0 & 0 & 0\\
       0 & 0 & 0 & 0 & 0 & 0 & 1/3 & 2/3 & 0\\
       0 & 0 & 0 & 0 & 0 & 0 & 0 & 0 & 1
       \end{matrix}
       \right],
\]
where the four rows of the matrix show the mass distribution at time step $t=0, 1, 2 ,3$ respectively. As we can see, the mass spreads out first and then goes to node $9$. When we allow for more steps $N=4$, the mass spreads even more before reassembling at node $9$, as shown below
 \[
       \footnotesize
       \left[
       \begin{matrix}
       1 & 0 & 0 & 0 & 0 & 0 & 0 & 0 & 0\\
       0 & 4/7 & 2/7 & 1/7 & 0 & 0 & 0 & 0 & 0\\
       0 & 0 & 1/7 & 1/7 & 2/7 & 0 & 1/7 & 2/7 & 0\\
       0 & 0 & 0 & 0 & 0 & 1/7 & 1/7 & 2/7 & 3/7\\
       0 & 0 & 0 & 0 & 0 & 0 & 0 & 0 & 1
       \end{matrix}
       \right].
\]

Now we change the graph by adding a cost on the edge $(7,\,9)$. In particular, we consider the weighted adjacency matrix
 \[
       B=
       \tiny
       \left[
       \begin{matrix}
       0 & 1 & 1 & 1 & 0 & 0 & 0 & 0 & 0\\
       0 & 0 & 1 & 0 & 1 & 0 & 1 & 0 & 0\\
       0 & 0 & 0 & 1 & 0 & 0 & 0 & 1 & 0\\
       0 & 0 & 0 & 0 & 0 & 0 & 0 & 1 & 0\\
       0 & 0 & 0 & 0 & 0 & 1 & 1 & 0 & 0\\
       0 & 0 & 0 & 0 & 0 & 0 & 0 & 0 & 1\\
       0 & 0 & 0 & 0 & 0 & 0 & 0 & 0 & 0.5\\
       0 & 0 & 0 & 0 & 0 & 0 & 0 & 0 & 1\\
       1 & 0 & 0 & 0 & 0 & 0 & 0 & 0 & 1
       \end{matrix}
       \right].
\]
When $N=3$ steps is allowed to transport a unit mass from node $1$ to node $9$, the evolution of mass distribution for the optimal transport plan is given by
 \[
       \footnotesize
       \left[
       \begin{matrix}
       1 & 0 & 0 & 0 & 0 & 0 & 0 & 0 & 0\\
       0 & 1/5 & 2/5 & 2/5 & 0 & 0 & 0 & 0 & 0\\
       0 & 0 & 0 & 0 & 0 & 0 & 1/5 & 4/5 & 0\\
       0 & 0 & 0 & 0 & 0 & 0 & 0 & 0 & 1
       \end{matrix}
       \right].
\]
The mass travels through paths $1-2-7-9$, $1-3-8-9$ and $1-4-8-9$, but unlike the unweighted case, the transport plan doesn't take equal probability for these three paths  Since we added a cost on the edge $(7,\,9)$, the probability that the mass takes this path becomes smaller. The plan does, however, assign equal probability to the two minimum cost paths $1-3-8-9$ and $1-4-8-9$ in agreement with Theorem \ref{robustOMT}. Suppose now we allow for more steps $N=4$ and change the $B$ matrix to
\[B=
       \tiny
       \left[
       \begin{matrix}
       0 & 0.7 & 0.7 & 0.7 & 0 & 0 & 0 & 0 & 0\\
       0 & 0 & 0.7 & 0 & 0.7 & 0 & 0.7 & 0 & 0\\
       0 & 0 & 0 & 0.7 & 0 & 0 & 0 & 0.7 & 0\\
       0 & 0 & 0 & 0 & 0 & 0 & 0 & 0.7 & 0\\
       0 & 0 & 0 & 0 & 0 & 0.7 & 0.7 & 0 & 0\\
       0 & 0 & 0 & 0 & 0 & 0 & 0 & 0 & 0.7\\
       0 & 0 & 0 & 0 & 0 & 0 & 0 & 0 & 0.5\\
       0 & 0 & 0 & 0 & 0 & 0 & 0 & 0 & 0.7\\
       0.7 & 0 & 0 & 0 & 0 & 0 & 0 & 0 & 0.9
       \end{matrix}
       \right].
\]
Here, transporting on any edge is expensive. It is, however, more expensive to transverse link $(7,9)$ and less expensive to let the mass sit at the sink node $9$. The evolution of the mass distribution is now
 \[
       \tiny
       \left[
       \begin{matrix}
         1   &     0     &    0    &     0    &     0   &      0    &     0    &     0   &      0\\
         0   & 0.5042 &   0.3173 &   0.1785    &     0     &    0     &    0    &     0   &      0\\
         0    &     0  &  0.1388  &  0.1388  &  0.2380     &    0  &  0.1275  &  0.3569    &     0\\
         0    &     0     &    0    &     0     &    0  &  0.1388  &  0.0992  &  0.2776  &  0.4844\\
         0     &    0     &    0     &    0    &     0    &     0   &      0    &     0   & 1.0000
           \end{matrix}
       \right].
\]
We observe that almost one half of the mass ($0.4844$) reaches node $9$ in three steps, and then sits there, travelling on the three shortest paths $1-2-7-9$, $1-3-8-9$ and $1-4-8-9$. As before, more mass ($0.1785$) travels on the two minimum cost paths $1-3-8-9$ and $1-4-8-9$ in agreement with Theorem \ref{robustOMT}, whereas $0.1275$ travels on the more expensive, minimum length path $1-2-7-9$. There are now several other ways the mass can reach node $9$ in $4$ steps. Our robust transportation plan takes full advantage of them, transporting more that one half of the total mass along these alternative paths.

Finally, we consider the case where the underlying graph is not strongly connected. In particular, we delete several links in Figure \ref{fig:graph} to make it not strongly connected and consider the graph in Figure \ref{fig:graph1}.
\begin{figure}[h]
\begin{center}
\includegraphics[width=0.43\textwidth]{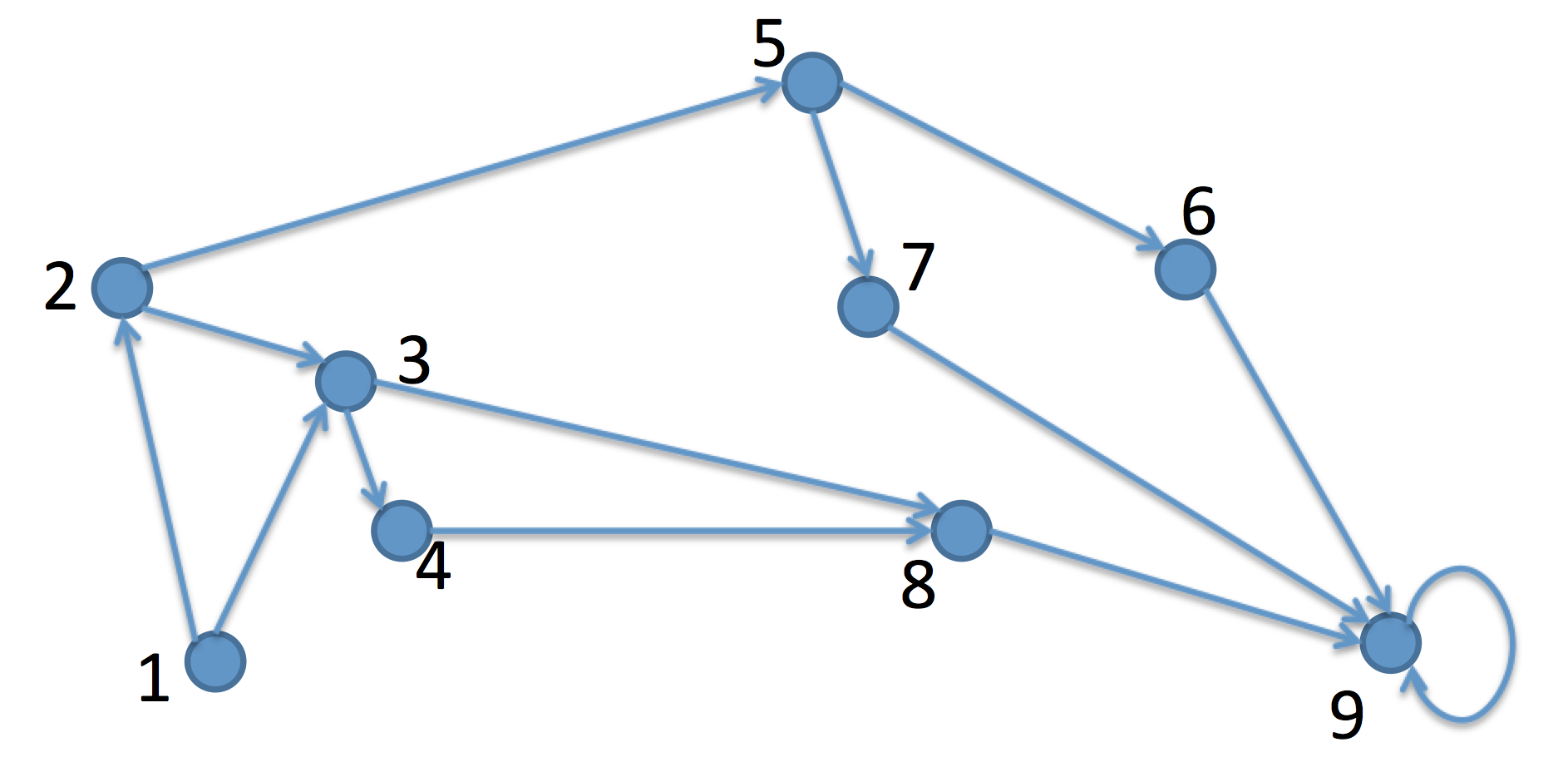}
\caption{Network topology}
\label{fig:graph1}
\end{center}
\end{figure}
Again we want to transport a unit mass from node $1$ to node $9$. In order to do this, we add an artificial energy $U_0$ to each non existing link as discussed in Section \ref{weighted}. We display the results for $N=4$ steps. When we take $U_0=2$, the evolution of mass is
 \[      \tiny
       \left[
       \begin{matrix}
       1   &      0    &     0    &     0    &     0    &     0     &    0     &    0    &     0\\
       0.0415 \!\! &\!\! 0.4079\!\! &\!\!   0.3416 \!\! &\!\!   0.0326  \!\! &\!\!  0.0462  \!\! &\!\!  0.0326  \!\! &\!\!  0.0326  \!\! &\!\!  0.0326 \!\! &\!\!  0.0326\\
       0.0270 \!\! &\!\!  0.0349 \!\! &\!\!   0.1740 \!\! &\!\!   0.1477  \!\! &\!\!  0.2330 \!\! &\!\!   0.0603 \!\! &\!\!  0.0603  \!\! &\!\!  0.1614 \!\! &\!\!   0.1014\\
       0.0116 \!\! &\!\!  0.0152 \!\! &\!\!   0.0199  \!\! &\!\!  0.0242  \!\! &\!\!  0.0163 \!\! &\!\!   0.1709 \!\! &\!\!   0.1709 \!\! &\!\!   0.2641 \!\! &\!\!   0.3069\\
         0  \!\! &\!\!       0   \!\! &\!\!      0     \!\! &\!\!    0    \!\! &\!\!     0   \!\! &\!\!      0    \!\! &\!\!     0     \!\! &\!\!    0  \!\! &\!\!  1
       \end{matrix}
       \right].
\]
We can see that there is quite a portion of mass traveling along virtual (non existing) edges. If we increase the value to $U_0=8$, then the mass evolution becomes
 \[\tiny
       \left[
       \begin{matrix}
       1.0000   \!\! &\!\!      0    \!\! &\!\!     0     \!\! &\!\!    0   \!\! &\!\!      0   \!\! &\!\!      0    \!\! &\!\!     0     \!\! &\!\!    0   \!\! &\!\!      0\\
    0.0001 \!\! &\!\!   0.5995 \!\! &\!\!   0.4000  \!\! &\!\!  0.0001 \!\! &\!\!   0.0001  \!\! &\!\!  0.0001  \!\! &\!\!  0.0001  \!\! &\!\!  0.0001 \!\! &\!\!   0.0001\\
    0.0000 \!\! &\!\!   0.0000 \!\! &\!\!   0.2000  \!\! &\!\!  0.1999   \!\! &\!\! 0.3994 \!\! &\!\!   0.0002  \!\! &\!\!  0.0002 \!\! &\!\!   0.1999 \!\! &\!\!   0.0004\\
    0.0000  \!\! &\!\!  0.0000  \!\! &\!\!  0.0000 \!\! &\!\!   0.0001 \!\! &\!\!   0.0000  \!\! &\!\!  0.1999  \!\! &\!\!  0.1999 \!\! &\!\!   0.3995\!\! &\!\!    0.2007\\
         0   \!\! &\!\!      0   \!\! &\!\!      0  \!\! &\!\!       0   \!\! &\!\!      0   \!\! &\!\!      0  \!\! &\!\!       0    \!\! &\!\!     0  \!\! &\!\!  1.0000
                \end{matrix}
       \right].
\]
The portion of mass traveling along non existing edges is negligible. Eventually, all the mass would be transported along feasible paths and in the limit the mass evolution (flow) is given by the rows of
 \[\footnotesize
       \left[
       \begin{matrix}
       1 & 0 & 0 & 0 & 0 & 0 & 0 & 0 & 0\\
       0 & 3/5 & 2/5 & 0 & 0 & 0 & 0 & 0 & 0\\
       0 & 0 & 1/5 & 1/5 & 2/5 & 0 & 0 & 1/5 & 0\\
       0 & 0 & 0 & 0 & 0 & 1/5 & 1/5 & 2/5 & 1/5\\
       0 & 0 & 0 & 0 & 0 & 0 & 0 & 0 & 1
       \end{matrix}
       \right].
\]

\section{Conclusions}
In this paper, we have proposed a novel approach to design a robust transportation plan on a given directed graph. It is based on a sort of generalized maximum entropy problem (Schr\"{o}dinger bridge) for measures on paths of the given network.  Taking as prior measure the Ruelle-Bowen-Parry random walker, the solution naturally tends to spread the mass on all available routes joining the source and the sink. Hence, the resulting transport appears robust with respect to links/nodes failure. This approach  can be adapted to 
graphs that are not strongly connected, as well as to weighted graphs. In the latter case, it can be used to effectively compromise between robustness and cost. Indeed, we exhibit a {\em robust} transportation plan which assigns {\em maximum probability} to {\em minimum cost paths} and therefore appears attractive when compared with Optimal Mass Transportation approaches. Since the transport plan is computed  as a Schr\"odinger bridge,
for which an efficient iterative algorithm is available, our procedure  also appears to be computationally attractive. 

In this paper,  in order to avoid obscuring the fundamental ideas and to keep the paper at a reasonable length, we have chosen to present the essential features of our approach without touching on a number of related fascinating topics.  For instance, in this paper {\em robustness} of a transport plan simply means that, in case of failure of certain links (e.g. due to congestion) or nodes, most of the mass will anyway reach the target nodes. There are, however,  other notions of robustness in graph theory \cite{callaway2000network,jamakovic2007relationship,wang2014wireless,sandhu2015graph,sandhu2015market},  some related to entropic principles \cite{arnold1994evolutionary,demetrius2005robustness}. 

When weights represent costs, our approach of Section \ref{weighted} compromizing  between minimization  and robustness can be further compared to {\em Optimal Mass Transport} (OMT) over graphs \cite{leonard2013lazy}, where only cost matters,  and entropically regularized OMT-schemes  \cite{cuturi2013sinkhorn,benamou2015iterative}. In discrete OMT, however, the cost function is supposed to be given, although computing it is typically an intractable problem for large networks. 

Also, it is apparent that choosing the uniform as prior distribution in the maximum entropy problem such as in Section \ref{RTG} we obtain a spreading of trajectories over which the transport occurs similar to the one in Optimal Mass Transport (OMT) on manifolds with positive Ricci-Curbastro {\em curvature} \cite{Vil08}. On discrete spaces and graphs, similar
notions of curvature have been defined by Ollivier \cite{ollivier2009ricci,ollivier2010survey}. They capture robustness and connectedness, convexity of entropy, and are related to the spectral gap \cite{bauer2011ollivier,jost2014ollivier}. Their relevance in applications is discussed in, e.g., \cite{banirazi2012heat,wang2014wireless,sandhu2015graph,sandhu2015market}. It is therefore natural to investigate the precise connection between the role of the prior in random evolutions such as those studied in this paper and deterministic evolution on discrete curved spaces.
All of these fascinating topics deserve further investigation and will be addressed elsewhere.

\spacingset{1}
\bibliographystyle{IEEEtran}
\bibliography{refs}

\begin{thebibliography}{10}
\providecommand{\url}[1]{#1}
\csname url@samestyle\endcsname
\providecommand{\newblock}{\relax}
\providecommand{\bibinfo}[2]{#2}
\providecommand{\BIBentrySTDinterwordspacing}{\spaceskip=0pt\relax}
\providecommand{\BIBentryALTinterwordstretchfactor}{4}
\providecommand{\BIBentryALTinterwordspacing}{\spaceskip=\fontdimen2\font plus
\BIBentryALTinterwordstretchfactor\fontdimen3\font minus
  \fontdimen4\font\relax}
\providecommand{\BIBforeignlanguage}[2]{{%
\expandafter\ifx\csname l@#1\endcsname\relax
\typeout{** WARNING: IEEEtran.bst: No hyphenation pattern has been}%
\typeout{** loaded for the language `#1'. Using the pattern for}%
\typeout{** the default language instead.}%
\else
\language=\csname l@#1\endcsname
\fi
#2}}
\providecommand{\BIBdecl}{\relax}
\BIBdecl

\bibitem{callaway2000network}
D.~S. Callaway, M.~E. Newman, S.~H. Strogatz, and D.~J. Watts, ``Network
  robustness and fragility: Percolation on random graphs,'' \emph{Physical
  review letters}, vol.~85, no.~25, p. 5468, 2000.

\bibitem{watts1998collective}
D.~J. Watts and S.~H. Strogatz, ``Collective dynamics of ‘small-world’
  networks,'' \emph{nature}, vol. 393, no. 6684, pp. 440--442, 1998.

\bibitem{scott2006network}
D.~M. Scott, D.~C. Novak, L.~Aultman-Hall, and F.~Guo, ``Network robustness
  index: a new method for identifying critical links and evaluating the
  performance of transportation networks,'' \emph{Journal of Transport
  Geography}, vol.~14, no.~3, pp. 215--227, 2006.

\bibitem{cabanes2014ants}
G.~Cabanes, E.~van Wilgenburg, M.~Beekman, and T.~Latty, ``Ants build
  transportation networks that optimize cost and efficiency at the expense of
  robustness,'' \emph{Behavioral Ecology}, vol.~26, no.~1, pp. 223--231, 2014.

\bibitem{brin2012reprint}
S.~Brin and L.~Page, ``Reprint of: The anatomy of a large-scale hypertextual
  web search engine,'' \emph{Computer networks}, vol.~56, no.~18, pp.
  3825--3833, 2012.

\bibitem{sandhu2015graph}
R.~Sandhu, T.~Georgiou, E.~Reznik, L.~Zhu, I.~Kolesov, Y.~Senbabaoglu, and
  A.~Tannenbaum, ``Graph curvature for differentiating cancer networks,''
  \emph{Scientific reports}, vol.~5, 2015.

\bibitem{delvenne2011centrality}
J.-C. Delvenne and A.-S. Libert, ``Centrality measures and thermodynamic
  formalism for complex networks,'' \emph{Physical Review E}, vol.~83, no.~4,
  p. 046117, 2011.

\bibitem{kitano2007towards}
H.~Kitano, ``Towards a theory of biological robustness,'' \emph{Molecular
  systems biology}, vol.~3, no.~1, p. 137, 2007.

\bibitem{georgiou2015positive}
T.~T. Georgiou and M.~Pavon, ``Positive contraction mappings for classical and
  quantum {S}chr{\"o}dinger systems,'' \emph{Journal of Mathematical Physics},
  vol.~56, no.~3, p. 033301, 2015.

\bibitem{pavon2010discrete}
M.~Pavon and F.~Ticozzi, ``Discrete-time classical and quantum markovian
  evolutions: Maximum entropy problems on path space,'' \emph{Journal of
  Mathematical Physics}, vol.~51, no.~4, p. 042104, 2010.

\bibitem{parry1964intrinsic}
W.~Parry, ``Intrinsic markov chains,'' \emph{Transactions of the American
  Mathematical Society}, vol. 112, no.~1, pp. 55--66, 1964.

\bibitem{cover2012elements}
T.~M. Cover and J.~A. Thomas, \emph{Elements of information theory}.\hskip 1em
  plus 0.5em minus 0.4em\relax John Wiley \& Sons, 2012.

\bibitem{Fol88}
H.~F{\"o}llmer, ``Random fields and diffusion processes,'' in \emph{{\'E}cole
  d'{\'E}t{\'e} de Probabilit{\'e}s de Saint-Flour XV--XVII, 1985--87}.\hskip
  1em plus 0.5em minus 0.4em\relax Springer, 1988, pp. 101--203.

\bibitem{lemmens2012nonlinear}
B.~Lemmens and R.~Nussbaum, \emph{Nonlinear Perron-Frobenius Theory}.\hskip 1em
  plus 0.5em minus 0.4em\relax Cambridge University Press, 2012, no. 189.

\bibitem{birkhoff1957extensions}
G.~Birkhoff, ``Extensions of jentzsch's theorem,'' \emph{Transactions of the
  American Mathematical Society}, vol.~85, no.~1, pp. 219--227, 1957.

\bibitem{bushell1973projective}
P.~Bushell, ``On the projective contraction ratio for positive linear
  mappings,'' \emph{Journal of the London Mathematical Society}, vol.~2, no.~2,
  pp. 256--258, 1973.

\bibitem{bushell1973hilbert}
P.~J. Bushell, ``Hilbert's metric and positive contraction mappings in a
  {B}anach space,'' \emph{Archive for Rational Mechanics and Analysis},
  vol.~52, no.~4, pp. 330--338, 1973.

\bibitem{For40}
R.~Fortet, ``R{\'e}solution d'un syst{\`e}me d'{\'e}quations de {M}.
  {S}chr{\"o}dinger,'' \emph{J. Math. Pures Appl.}, vol.~83, no.~9, 1940.

\bibitem{Beu60}
A.~Beurling, ``An automorphism of product measures,'' \emph{The Annals of
  Mathematics}, vol.~72, no.~1, pp. 189--200, 1960.

\bibitem{Jam74}
B.~Jamison, ``Reciprocal processes,'' \emph{Z. Wahrscheinlichkeitstheorie verw.
  Gebiete}, vol.~30, pp. 65--86, 1974.

\bibitem{horn2012matrix}
R.~A. Horn and C.~R. Johnson, \emph{Matrix analysis}.\hskip 1em plus 0.5em
  minus 0.4em\relax Cambridge university press, 2012.

\bibitem{ruelle2004thermodynamic}
D.~Ruelle, \emph{Thermodynamic formalism: the mathematical structure of
  equilibrium statistical mechanics}.\hskip 1em plus 0.5em minus 0.4em\relax
  Cambridge University Press, 2004.

\bibitem{rachev1998mass}
S.~T. Rachev and L.~R{\"u}schendorf, \emph{Mass Transportation Problems: Volume
  I: Theory}.\hskip 1em plus 0.5em minus 0.4em\relax Springer Science \&
  Business Media, 1998, vol.~1.

\bibitem{bazaraa2011linear}
M.~S. Bazaraa, J.~J. Jarvis, and H.~D. Sherali, \emph{Linear programming and
  network flows}.\hskip 1em plus 0.5em minus 0.4em\relax John Wiley \& Sons,
  2011.

\bibitem{jamakovic2007relationship}
A.~Jamakovic and S.~Uhlig, ``On the relationship between the algebraic
  connectivity and graph's robustness to node and link failures,'' in
  \emph{Next Generation Internet Networks, 3rd EuroNGI Conference on}.\hskip
  1em plus 0.5em minus 0.4em\relax IEEE, 2007, pp. 96--102.

\bibitem{wang2014wireless}
C.~Wang, E.~Jonckheere, and R.~Banirazi, ``Wireless network capacity versus
  ollivier-ricci curvature under heat-diffusion (hd) protocol,'' in
  \emph{American Control Conference (ACC), 2014}.\hskip 1em plus 0.5em minus
  0.4em\relax IEEE, 2014, pp. 3536--3541.

\bibitem{sandhu2015market}
R.~Sandhu, T.~Georgiou, and A.~Tannenbaum, ``Market fragility, systemic risk,
  and ricci curvature,'' \emph{arXiv preprint arXiv:1505.05182}, 2015.

\bibitem{arnold1994evolutionary}
L.~Arnold, V.~M. Gundlach, and L.~Demetrius, ``Evolutionary formalism for
  products of positive random matrices,'' \emph{The Annals of Applied
  Probability}, pp. 859--901, 1994.

\bibitem{demetrius2005robustness}
L.~Demetrius and T.~Manke, ``Robustness and network evolution: an entropic
  principle,'' \emph{Physica A: Statistical Mechanics and its Applications},
  vol. 346, no.~3, pp. 682--696, 2005.

\bibitem{leonard2013lazy}
C.~L{\'e}onard, ``Lazy random walks and optimal transport on graphs,''
  \emph{arXiv preprint arXiv:1308.0226}, 2013.

\bibitem{cuturi2013sinkhorn}
M.~Cuturi, ``Sinkhorn distances: Lightspeed computation of optimal transport,''
  in \emph{Advances in Neural Information Processing Systems}, 2013, pp.
  2292--2300.

\bibitem{benamou2015iterative}
J.-D. Benamou, G.~Carlier, M.~Cuturi, L.~Nenna, and G.~Peyr{\'e}, ``Iterative
  bregman projections for regularized transportation problems,'' \emph{SIAM
  Journal on Scientific Computing}, vol.~37, no.~2, pp. A1111--A1138, 2015.

\bibitem{Vil08}
C.~Villani, \emph{Optimal transport: old and new}.\hskip 1em plus 0.5em minus
  0.4em\relax Springer, 2008, vol. 338.

\bibitem{ollivier2009ricci}
Y.~Ollivier, ``Ricci curvature of markov chains on metric spaces,''
  \emph{Journal of Functional Analysis}, vol. 256, no.~3, pp. 810--864, 2009.

\bibitem{ollivier2010survey}
------, ``A survey of ricci curvature for metric spaces and markov chains,''
  \emph{Probabilistic approach to geometry}, vol.~57, pp. 343--381, 2010.

\bibitem{bauer2011ollivier}
F.~Bauer, J.~Jost, and S.~Liu, ``Ollivier-{R}icci curvature and the spectrum of
  the normalized graph laplace operator,'' \emph{arXiv preprint
  arXiv:1105.3803}, 2011.

\bibitem{jost2014ollivier}
J.~Jost and S.~Liu, ``Ollivier's {R}icci curvature, local clustering and
  curvature-dimension inequalities on graphs,'' \emph{Discrete \& Computational
  Geometry}, vol.~51, no.~2, pp. 300--322, 2014.

\bibitem{banirazi2012heat}
R.~Banirazi, E.~Jonckheere, and B.~Krishnamachari, ``Heat diffusion algorithm
  for resource allocation and routing in multihop wireless networks,'' in
  \emph{Global Communications Conference (GLOBECOM), 2012 IEEE}.\hskip 1em plus
  0.5em minus 0.4em\relax IEEE, 2012, pp. 5693--5698.

\end{thebibliography}
\end{document}